\setlist{nosep}
\newcommand{\pparagraph}[1]{\vspace{0.1in}\noindent{\bf \boldmath #1.}}
\renewcommand{\epsilon}{\varepsilon}
\renewcommand{\sf}[1]{\mathsf{#1}}
\newcommand{\OSAT}{\sf{Oracle\mbox{-}3SAT}}
\newcommand{\cMRIP}{\sf{O(1)}\mbox{-}\sf{MRIP}}
\newcommand{\aMRIP}{\sf{\alpha(n)}\mbox{-}\sf{MRIP}}
\newcommand{\pMRIP}{\sf{poly(n)}\mbox{-}\sf{MRIP}}
\date{}
\title{Rational Proofs with Multiple Provers\thanks{A preliminary version of this paper~\cite{ChenMcSi16} appeared at the 7th Innovations in Theoretical Computer Science (ITCS 2016). This is the full version that contains new results.}
}
\author{
  Jing Chen\thanks{
Computer Science Department, Stony Brook University,
Stony Brook, NY 11794, USA. Email:~\texttt{\{jingchen, shiksingh\}@cs.stonybrook.edu}}
\and
Samuel McCauley\thanks{
IT University of Copenhagen, Rued Langgaards Vej 7, 2300 Copenhagen S, Denmark.
Email:~\texttt{samc@itu.dk}}
\and
Shikha Singh\footnotemark[2]
}
\begin{document}

\maketitle
\sloppy
\begin{abstract}

Interactive proofs (IP) model a world where a verifier delegates computation to an untrustworthy prover, verifying the prover's claims before accepting them.
IP protocols have applications in areas such as verifiable computation outsourcing, computation delegation, cloud computing, etc.
In these applications, the verifier may pay the prover based on the quality of his work.
Rational interactive proofs (RIP), introduced by Azar and Micali (2012),
are an interactive-proof system with payments, in which the prover is {\em rational} rather than untrustworthy---he may lie, but only to increase his payment.
Rational proofs leverage the prover's rationality
to obtain simple and efficient protocols.
Azar and Micali show that RIP=IP(=PSPACE), i.e., 
the set of provable languages stay the same with a single rational prover (compared to classic IP).
They leave the question of whether multiple provers are more powerful than a single prover for rational and classical proofs as an open problem.

In this paper we introduce multi-prover rational interactive proofs (MRIP).
Here, a verifier cross-checks the provers' answers with each other
and pays them according to the messages exchanged. The provers are {\em cooperative} and maximize their {total} expected
payment if and only if the verifier learns the correct answer to the problem.
We further refine the model of MRIP to incorporate {\em utility gaps}, which is the loss in payment suffered by provers
who mislead the verifier to the wrong answer.

We define the class of MRIP protocols with constant, noticeable and negligible utility 
gaps---the payment loss due to a wrong answer is $O(1)$, $1/n^{O(1)}$ and $1/2^{n^{O(1)}}$ respectively, where $n$ is the length of the input.
We give tight characterization for all three MRIP classes.
On the way, we resolve Azar and Micali's open problem---under standard complexity-theoretic assumptions,
MRIP is not only more powerful than RIP,
but also more powerful than MIP (classic multi-prover IP);
and this is true even the utility gap is required to be constant.  We further show that the
full power of each MRIP class can be achieved using only
two provers and three rounds of communication.

\end{abstract}

\section{Introduction} \seclabel{introduction}
Multi-prover interactive proofs (MIP) \cite{ben1988multi} and rational
interactive proofs (RIP)~\cite{azar2012rational} are two important
extensions of classic (single-prover) interactive proof systems \cite{IP, babai1985trading}.
In a multi-prover interactive proof, several computationally unbounded,
potentially dishonest provers interact with a polynomial-time, randomized
verifier.  The provers can pre-agree on a joint strategy to convince the
verifier about the truth of a proposition.  However, once the protocol starts, the
provers cannot communicate with each other.  If the proposition is true, the
verifier should be convinced with probability~1; otherwise the verifier should
reject with probability at least $2/3$.
It is well known that $\sf{MIP} = \sf{NEXP}$~\cite{babai1991non}, which demonstrates the power
of multiple provers compared to single-prover interactive proofs ---recall that
$\sf{IP} =\sf{PSPACE}$ \cite{Shamir92,lund1992algebraic}.

Rational interactive proofs \cite{azar2012rational} are
interactive proofs in which the verifier makes a payment to the prover at the end
of the protocol. The prover is assumed to be {\em rational}: that is, he only
acts in ways that maximize his expected payment.
Thus, unlike classic interactive
proofs, in rational proofs the prover does not care whether the verifier is convinced or not.
Instead, rational proofs ensure that the prover maximizes the expected payment if and only if
the verifier learns the truth of the proposition.
In~\cite{azar2012rational} Azar and Micali introduce rational proofs and show that while rational proofs are no more powerful
than classic interactive proofs in terms of the class of provable languages (i.e.,  $\sf{RIP} = \sf{PSPACE}$), the protocols are
simpler and more efficient. They have left the power (and the model) of multi-prover rational proofs as an open problem.

Meanwhile, many real-world computation-outsourcing applications have ingredients of both MIP and RIP:
the verifier pays a team of provers based on their responses.  For
example, in Internet marketplaces such as Mechanical
Turk~\cite{turk} and Proof Market~\cite{proofmarket}, the requesters
(verifiers) post labor-intensive tasks on the website along with a monetary
compensation they are willing to pay. The providers (provers) accept these
offers and perform the job.  In these marketplaces and other crowdsourcing
scenarios~\cite{von2008designing}, correctness is often ensured by verifying one
provider's answers against another~\cite{von2004labeling,effective-turk}.
Thus, the providers implicitly work as a team---their answers need to match,
even though they are likely to not know each other and cannot communicate with
each other~\cite{kittur2010crowdsourcing}. While these applications differ from interactive proofs in many ways,
they motivate the study of a proof system with multiple cooperative rational provers.

Inspired by these applications and previous theoretical work, we introduce {\em multi-prover rational
interactive proofs}, which combine elements of rational proofs and classical multi-prover interactive
proofs.
%
%
In this paper we focus on the following question: what computation problems can be solved by a team of
rational workers who get paid based on
the joint-correctness of their answers and cannot communicate with each other?
The main contribution of this paper is to fully characterize the power of such computation-outsourcing systems,
under different requirements on the payment loss suffered when the provers solve the problem incorrectly.
Our model is formally defined in Section \ref{sec:mrip}; we briefly summarize it below.


%
%
%

\pparagraph{Cooperative Multi-Prover Rational Proofs}
In a cooperative multi-prover rational interactive proof, polynomially many computationally-unbounded
provers communicate with a polynomial-time randomized verifier, where the verifier wants to
decide the membership of an input string in a language.  The provers can
pre-agree on how they plan to respond to the verifier's messages, but they
cannot communicate with each other once the protocol begins.  At the end of the
protocol, the verifier outputs the answer and computes a total payment for the
provers, based on the input, his own randomness, and the messages exchanged.

A protocol is an MRIP protocol if any strategy of the provers that maximizes
their expected payment leads the verifier to the correct answer.  The class of
languages having such protocols is denoted by $\sf{MRIP}$.
Note that classical multi-prover interactive proofs are robust against
arbitrary malicious provers;
MRIP protocols instead require provers to be rational---a reasonable
requirement in a ``mercantile world''~\cite{azar2012rational}.

\pparagraph{Distribution of Payments}
In classical MIP protocols, the provers work cooperatively
to convince the verifier of the truth of a proposition, and their goal is to maximize
the verifier's acceptance probability. Similarly,
the rational provers in MRIP protocols work cooperatively to maximize the total payment
received from the verifier.
Any pre-specified way of distributing this payment among them is allowed, as long
as it does not depend on the transcript of the protocol (i.e., the messages exchanged, the coins flipped, and the amount of the payment).
For instance, the division of the payment can be pre-determined by the provers themselves based on the amount of work each prover must perform,
or it can be pre-determined by the verifier based on the reputation of each prover in a marketplace.
Unbalanced divisions are allowed: for example, one prover may receive half
of the total payment, while the others split the remaining evenly.
We will ignore the choice of division in our model and protocols, as it does not affect the provers' decisions when choosing
their strategy.


\pparagraph{Utility Gaps}
Rational proofs assume that the provers always act to maximize
their payment. However, how much do they lose by lying? If the payment loss is small, a prover may very well ``get lazy'' and simply return a default answer without
performing any computation.
Although the classic notion of rationality in game theory
requires a player to always choose the best strategy to maximize his utility,
the notion of bounded rationality has also been studied \cite{rubinstein1998modeling, conlisk1996bounded}.

The notion of {\em utility gap}
measures the payment or utility loss incurred by a deviating prover.
A deviating prover may (a) deviate slightly from the truthful protocol but still lead the verifier to the correct answer or
(b) deviate and mislead the verifier to an incorrect answer. Azar and Micali~\cite{azar2013super} introduce utility gaps
by demanding their protocols be robust against provers of
type (a)---any deviation from the prescribed strategy
results in a significant decrease in the payment.
This ideal requirement on utility gaps is too strong: even the protocol
in~\cite{azar2013super} fails to satisfy it~\cite{guo2014rational}.

In this work, we
consider multi-prover rational proofs robust against provers of type (b), i.e.,
the provers may send some incorrect messages and only incur a small
payment loss, but if they mislead the verifier to the wrong answer to the membership question
of the input string, then the provers must
suffer a significant loss in the payment.
Such deviations were also considered in~\cite{guo2014rational}, but for single-prover protocols and with a slightly different notion
of utility gaps.

We strengthen our model by considering
MRIP protocols
with {\em constant} as well as {\em noticeable} (i.e. {\em polynomial}) utility gaps,
where the payment loss suffered by the provers on reporting the incorrect answer is at least $1/k$ and $1/n^k$ respectively,
where $k$ is a constant and $n$ is the length of the input string.
We say an MRIP protocol has a {\em negligible} (or {\em exponential}) utility gap if the
payment loss is at least $1/2^{n^k}$.
Any MRIP protocol has at least a negligible utility gap, because the rewards are generated by a polynomial-time verifier.

\pparagraph{Complexity Classes With Oracle Queries}  Our
characterizations of MRIP protocols are closely related to complexity classes with oracle queries.
In particular, let $\sf{EXP^{||NP}}$ be the class of languages decidable
by an exponential-time Turing machine with non-adaptive access to an $\sf{NP}$ oracle. Note that in this case, the queries may be exponentially long.
Non-adaptive access means that all queries must be decided before any one query is made; they may not depend on each other.
Similar classes, such as $\sf{P^{||NEXP}}$, are defined analogously.
In some cases we consider complexity classes where the number of oracle queries is limited.
For example, $\sf{P^{||NEXP[O(1)]}}$ is the class of languages decidable by a polynomial-time
Turing machine which can make $O(1)$ non-adaptive queries to an $\sf{NEXP}$ oracle.

Many of these classes have been studied previously; see \secref{related}.

\subsection{Main Results}
We now present our main results and discuss several interesting aspects of our model.

\pparagraph{The Power of Multi-Prover Rational Proofs}
We denote the classes of {MRIP} protocols
with constant, polynomial and exponential utility gaps as $\cMRIP$,
$\pMRIP$ and $\sf{MRIP}$ respectively. By definition, $\cMRIP \subseteq \pMRIP \subseteq \sf{MRIP}$.

In this work, we fully characterize the computation power of all three MRIP classes. 


\begin{theorem}\thmlabel{constantchar}
$\cMRIP = \sf{P^{||NEXP[O(1)]}}$.
\end{theorem}

That is, a language has an MRIP protocol with constant utility gap if and only if it can be decided by a polynomial-time Turing machine that makes a constant number of non-adaptive queries
to an $\sf{NEXP}$ oracle.

\thmref{constantchar} implies that $\cMRIP$ contains both $\sf{NEXP}$ and $\sf{coNEXP}$. That is,
multi-prover rational proofs with even {\em constant} utility gaps are
strictly more powerful
than single-prover rational proofs,
assuming $\sf{PSPACE \neq NEXP}$. Furthermore, multi-prover rational proofs (even with constant utility gaps)
are strictly more powerful than classical
multi-prover interactive proofs,
assuming $\sf{NEXP}\neq \sf{coNEXP}$.
The relationship between rational and classical interactive proof systems is illustrated in Figure \ref{fig:classes}.
\begin{figure}[h]
     \begin{center}
\resizebox{0.4\textwidth}{!}{\input{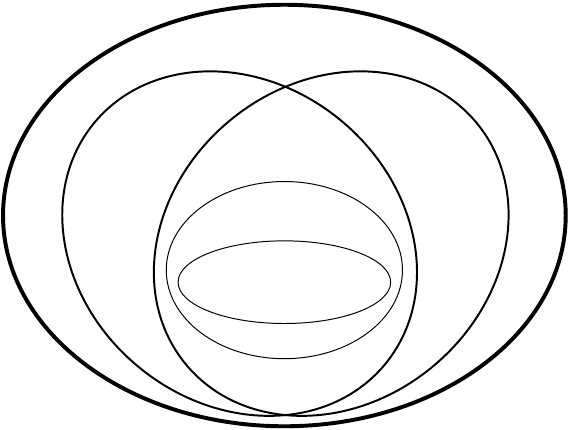_t}}
    \end{center}
\caption{The computation power of rational and classical interactive proof systems.
    Note that it is widely believed that $\sf{PSPACE} \neq \sf{EXP}$, $\sf{EXP} \neq \sf{NEXP}$, and $\sf{NEXP} \neq \sf{coNEXP}$.
}
\label{fig:classes}
\end{figure}

\begin{theorem}\thmlabel{polychar}
$\pMRIP = \sf{P^{||NEXP}}$.
\end{theorem}

That is, a language has an MRIP protocol with polynomial utility
gap if and only if it can be decided by a polynomial-time Turing machine with non-adaptive access
to an $\sf{NEXP}$ oracle.

\begin{theorem}\thmlabel{expchar}
$\mathsf{MRIP = EXP^{||NP}}.$
\end{theorem}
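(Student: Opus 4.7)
The plan is to prove $\mathsf{MRIP} = \mathsf{EXP^{||NP}}$ by establishing the two inclusions separately. First I will show that any MRIP protocol can be simulated by a deterministic exponential-time machine with parallel access to an NP oracle, and then that any $\mathsf{EXP^{||NP}}$ computation admits an MRIP protocol, leveraging the existing $\mathsf{MIP} = \mathsf{NEXP}$ machinery together with rational incentives.

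For the upper bound $\mathsf{MRIP} \subseteq \mathsf{EXP^{||NP}}$, fix an MRIP protocol with verifier $V$ and provers $P_1, P_2, \ldots$. A joint prover strategy $\sigma$ assigns to each transcript history the provers' next messages; since the number of rounds and each message length are polynomial in $n$, a full description of $\sigma$ has length $2^{\mathrm{poly}(n)}$. Given $\sigma$, the expected payment $E_r[\mathrm{pay}(V(\sigma,r))]$ can be computed in time $2^{\mathrm{poly}(n)}$ by summing over all verifier coin strings $r$. Because the per-trial payment has only $\mathrm{poly}(n)$ bits, the expected payment is a rational with denominator at most $2^{\mathrm{poly}(n)}$, taking at most $2^{\mathrm{poly}(n)}$ distinct values. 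For each such candidate threshold $v$ and output bit $b \in \{0,1\}$, the question ``does there exist $\sigma$ with expected payment $\geq v$ and verifier output $b$?'' is in $\mathsf{NP}$ when measured relative to an input of length $2^{\mathrm{poly}(n)}$: the witness is $\sigma$, and the verification is the exponential-time average described above. The simulating machine issues all such queries in parallel, finds $v^{*} = \max\{v : \text{some query at } v \text{ returned YES}\}$, and outputs the unique $b$ for which the query at $(v^{*}, b)$ returned YES.

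For the lower bound $\mathsf{EXP^{||NP}} \subseteq \mathsf{MRIP}$, let $M$ be an exponential-time Turing machine with non-adaptive access to an NP oracle deciding $L$, and on input $x$ let $q_1, \ldots, q_N$ denote the oracle queries (with $N \leq 2^{\mathrm{poly}(n)}$) and $a^{*} = (a_1^{*}, \ldots, a_N^{*})$ the true answer vector. The plan is a two-prover protocol in which prover $P_1$ commits implicitly (via a low-degree encoding, as in the standard $\mathsf{MIP}=\mathsf{NEXP}$ construction) to a claimed answer vector $a$ and announces a final decision $d$. The verifier then samples a random index $i \in [N]$ and runs, in parallel, two MIP subprotocols with both provers: the first verifies the NEXP statement ``$a_i$ is the correct oracle answer to $q_i$,'' and the second verifies the NEXP statement ``$M$ with oracle answers equal to the committed vector $a$ outputs $d$.'' The total payment is the sum of the two MIP-acceptance indicators. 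An expectation calculation shows that the payment is uniquely maximized by the honest strategy $(a, d) = (a^{*}, M(x, a^{*}))$: lying on $k$ coordinates of $a$ incurs an expected loss of $\Omega(k/N)$ on the first subprotocol without improving the second (which only enforces consistency with the committed $a$), and inconsistency between $d$ and $M(x, a)$ costs a constant on the second. Hence the verifier's output $d$ equals $M(x, a^{*})$, the correct membership bit.

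The main obstacle will be showing that the two MIP subprotocols compose cleanly despite sharing the two provers and allowing tradeoffs between them. Concretely, I must rule out any mixed strategy in which the provers deliberately fail the $a_i$-correctness check for some indices in order to produce a committed $a$ that supports a wrong $d$ through the second MIP. The key observation is that the second subprotocol verifies $d$ only against the committed $a$, so the losses from the two subprotocols are additive rather than substitutable, and any deviation strictly decreases expected payment. Carrying this out requires pinning the provers to a single low-degree commitment to $a$ across both subprotocols and balancing the payment weights so that each MIP's soundness gap is non-trivial in the combined scoring. Once these technical points are handled, both inclusions follow and the characterization $\mathsf{MRIP} = \mathsf{EXP^{||NP}}$ is complete.
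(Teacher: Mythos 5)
Your upper-bound direction ($\mathsf{MRIP}\subseteq\mathsf{EXP^{||NP}}$) is essentially the paper's argument (Lemmas \ref{lem:mripup} and \ref{lem:equaloracles} combined): enumerate the exponentially many possible expected payments, ask non-adaptively whether some strategy profile attains each value together with each answer bit, and read off the bit at the maximum. The only care needed is that each query must be explicitly padded to exponential length so that guessing a $2^{\mathrm{poly}(n)}$-bit strategy profile and averaging over all coin strings is polynomial \emph{in the query length}; this is exactly the paper's padding lemma and is routine.

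The lower bound, however, has a genuine gap. Your payment is ``the sum of the two MIP-acceptance indicators,'' and your first subprotocol is supposed to verify that $a_i$ is the correct oracle answer to $q_i$. When $a_i=0$ this is the statement $q_i\notin O$, which is a $\mathsf{coNEXP}$ statement (relative to $|x|$), not an $\mathsf{NEXP}$ one, so there is no MIP for it under standard assumptions and no acceptance indicator to pay on. Consequently your incentive calculation collapses: a false ``no'' claim on a coordinate incurs no detectable loss in the first subprotocol, while it can freely change the committed vector $a$ and hence the decision $d$ certified by the second subprotocol, so the claimed $\Omega(k/N)$ loss does not hold for such deviations. Handling ``no'' answers is precisely where rationality must be used non-trivially: the paper's protocol pays a fixed intermediate amount (e.g.\ $1/2$) for a claimed ``no'' and runs the MIP only for a claimed ``yes'' (this is what makes $\mathsf{MRIP}$ closed under complement, Lemma \ref{lem:closed}, and strictly stronger than $\mathsf{MIP}$); your scheme omits this asymmetric payment, and without it the protocol is not an MRIP protocol.

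A second, related gap is that your polynomial-time verifier cannot materialize the objects your two subprotocols are about: the queries $q_i$ and the answer vector $a$ are computed by an exponential-time machine and may be exponentially long, so ``run an MIP for $q_i\in O$'' and ``verify that $M$ with oracle answers $a$ outputs $d$'' are not off-the-shelf applications of $\mathsf{MIP}=\mathsf{NEXP}$ --- the second statement is about an input that exists only inside the provers' commitment. You flag this composition issue as the main obstacle but do not resolve it, and resolving it is most of the work. The paper instead first shows $\mathsf{EXP^{||NP}}=\mathsf{EXP^{||poly\text{-}NEXP}}$ so the oracle queries become polynomially long, and then replaces your two global subprotocols by a three-level DC-uniform circuit (queries, $\mathsf{NEXP}$ gates, post-processing) in which the verifier spot-checks a single random gate, cross-checks its value with a second prover, invokes the rational $\mathsf{NEXP}$ subprotocol only on a polynomial-length gate input, and rescales the payments so that every deviation is strictly unprofitable (Lemma \ref{lem:mriplow}). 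Until you supply a concrete mechanism with these features --- in particular the treatment of ``no'' oracle answers and the binding of the implicit commitment across subprotocols --- the $\mathsf{EXP^{||NP}}\subseteq\mathsf{MRIP}$ direction is not established.
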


That is,  a language has an MRIP protocol (with exponential utility gap)
 if and only if it can be decided by an exponential-time Turing machine with non-adaptive access
to an $\sf{NP}$ oracle.


\smallskip

We give MRIP protocols for $\sf{NEXP}$, which are used as a building block in our proofs.
To prove \thmref{constantchar} and \thmref{polychar}, we establish 
a general reduction between the utility gap of MRIP protocols
and the query complexity of oracle Turing machines. This reduction may be of independent interest 
when studying the relationship between these two computation models.

Finally, to prove \thmref{expchar}, we introduce another complexity class as an intermediate step,
and use its circuit characterization to construct the corresponding MRIP protocol.
Similar circuit based characterization is also used by Azar and Micali in~\cite{azar2013super}, but 
but their technique results in
an exponential blow-up in the number of messages when applied directly to our case. We use multiple provers
to avoid this communication blow up; see~\secref{mripchar}.

%
%


\pparagraph{MRIP with Two Provers and Constant Number of Rounds}\label{subsubsec:23}
While we allow polynomially many provers and rounds in MRIP, how many provers and rounds are really needed to capture the full power
of the system? In real-world applications, protocols with few provers and rounds are desirable, as it may be hard for the
verifier to recruit a large number of provers or to retain the provers for a long period of time to execute many rounds.

Under the classic model of interactive proofs, it is well known that any MIP protocol
can be simulated using only two
provers and one round of communication between the provers and the
verifier~\cite{feige1992two}.
In this work, we prove analogous results for all three of our MRIP classes.

Specifically, let $\sf{MRIP}[p(n), k(n), t(n)]$ denote the class of languages that have
MRIP protocols with $p(n)$ provers, $k(n)$ rounds, and $1/t(n)$ utility gap, and let
$\mbox{poly}(n)$ and $\mbox{exp}(n)$ denote the class of polynomial and exponential functions respectively, where $n$ is the input length. 

%


\begin{theorem}\thmlabel{mrip-const-23}
$\cMRIP = \sf{MRIP}[2,3, O(1)]$.
\end{theorem}

\begin{theorem}\thmlabel{mrip-poly-23}
$\pMRIP = \sf{MRIP}[2,3, \mbox{\em poly}(n)]$.
\end{theorem}

\begin{theorem}\thmlabel{mrip-23}
$\sf{MRIP} = \sf{MRIP}[2,3, \mbox{\em exp}(n)]$.
\end{theorem}

That is, any MRIP protocol using polynomially many provers and polynomially many rounds that has a constant, polynomial, or exponential utility gap can be simulated by a 2-prover 3-round MRIP protocol that retains the utility gap.
Our analysis for $\cMRIP$ and $\pMRIP$
relies on a different approach than that for $\sf{MRIP}$, and all three theorems are based on our characterizations for the corresponding general MRIP classes.



\smallskip

It is worth pointing out that we count the number of rounds in a protocol differently from classic IP and MIP protocols.
In the classic protocols, the number of rounds
is the number of {\em pairs} of back-and-forth
interactions (see, e.g., \cite{feige1992two}); while in our protocols it is the total number of interactions---that is,
the provers' messages and the verifier's messages are considered as different rounds.
An odd number of rounds
is an intrinsic property of multi-prover rational proofs, as an MRIP protocol by default starts with the provers
reporting the {\em answer bit} to the verifier (see~\secref{mrip} for details).
Thus, the 3-round protocols consist of the first ``answer bit round'',
followed by a single back-and-forth exchange corresponding to a single round in IP or  MIP.
Indeed, any non-trivial MRIP protocol---that is, any MRIP protocol that cannot be simulated by a single prover---requires
at least three rounds. Thus, three rounds are optimal and
all the theorems above are tight.


Finally, we note that the power of MRIP protocols remains the same even when it is restricted to constant number of rounds,
while the power of RIP protocols decreases. In particular, Azar and Micali~\cite{azar2012rational} show that
the class of languages having constant-round single-prover rational proofs is exactly the counting hierarchy,
while $\sf{RIP = PSPACE}$.
This difference between $\sf{MRIP}$ and $\sf{RIP}$ is analogous to the difference between $\sf{MIP}$ and $\sf{IP}$.

    \subsection{Related Work}\seclabel{related}

\pparagraph{Interactive Proofs}
First introduced by Goldwasser et al.~\cite{IP} and in a different
form by Babai and Moran~\cite{babai1985trading}, {\em interactive proofs} (IP) have
been extensively studied in the literature~\cite{goldreich1991proofs, babai1991non,
ben1988multi, babai1988arthur,
fortnow1994power,fortnow1988there, goldwasser1986private} and fully
characterized by the well known result, $\sf{IP}=\sf{PSPACE}$~\cite{Shamir92,lund1992algebraic}.
Ben-Or et al.~\cite{ben1988multi} introduced multi-prover interactive proofs
(MIP), which has been shown to be exactly $\sf{NEXP}$~\cite{babai1991non}.
In fact, two provers and one round are sufficient to achieve the full power of multi-prover interactive proofs:
that is, $\sf{NEXP = MIP(2,1)}$~\cite{feige1992two}.

Introduced by Goldwasser et al.~\cite{goldwasser2008delegating}, {\em computation delegations} are interactive proofs where the provers are also computationally bounded.
These protocols have been studied by many ever since; see, for example,~\cite{rothblum2013interactive,gur2015non,
kalai2015arguments, canetti2013refereed,canetti2012two, reingold2016constant}. 
Recently, interactive proofs have also been studied in {\em streaming} settings~\cite{cormode2012practical,chakrabarti2015verifiable,
cormode2011verifying,daruki2015streaming}.

\pparagraph{Rational Proofs}
Azar and Micali~\cite{azar2012rational} first introduced {\em rational interactive proofs} (RIP)
and used scoring rules to construct simple and efficient (single-prover) RIP protocols.
In~\cite{azar2013super}, the same authors designed super-efficient rational proofs, where the verifier runs in logarithmic time and the communication complexity is sublinear.
Guo et al.~\cite{guo2014rational} considered rational arguments
for a computationally bounded prover and a sublinear verifier. In~\cite{guo2016rational},
the same authors constructed
rational arguments for all languages in $\sf{P}$.
Moreover, Campanelli and Rosario~\cite{campanelli2015sequentially} study sequentially composable rational proofs and
Zhang and Blanton~\cite{zhang2014efficient} design protocols to outsource matrix multiplications to a rational cloud.

\pparagraph{Refereed Games} Game-theoretic
characterization of complexity classes has been studied in the form
of {\em refereed games}~\cite{chandra1976alternation, feige1990noisy,
feige1997making, feige1992multi, reif1984complexity,
feigenbaum1995game,koller1992complexity}.
They are interactive proofs consisting of
two competing provers.
One of them is always honest and tries to convince the verifier of the
membership (or non-membership) of an input string in a language; the other 
is always dishonest and tries to mislead the verifier.
Chandra and Stockmeyer~\cite{chandra1976alternation} show
that any language in
$\sf{PSPACE}$ is refereeable by a game of perfect information,
and
Feige and Kilian~\cite{feige1997making} show that this is tight for single-round refereed games
and that the class of
languages with polynomial-round refereed games is exactly $\sf{EXP}$.

Feigenbaum et al.~\cite{feigenbaum1995game} show that any language in $\sf{EXP^{NP}}$ can be simulated as a zero-sum refereed game between two
computationally unbounded provers with {\em imperfect recall}.
Note that imperfect recall is a strong assumption and makes the
computationally unbounded provers essentially act as oracles.
By contrast, MRIP protocols have cooperative provers with {\em imperfect information} ---since a prover does not see the
messages exchanged between the verifier and the other provers--- and {\em perfect recall}
---since a prover remembers the
history of messages exchanged between himself and the verifier.  Note that imperfect
information is necessary for multi-prover protocols:
if all provers can see all messages exchanged in the protocol, then
the model degenerates to a single-prover case.
Moreover, perfect recall gives the provers the ability to cheat adaptively across rounds.
To the best of our knowledge, $\sf{MRIP}$ gives the first game-theoretic characterization of the class $\sf{EXP^{||NP}}$.

\pparagraph{Query Complexity and Related Complexity Classes}
The query complexity of oracle Turing machines
has been widely studied in the literature~\cite{beigel1991bounded,wagner1990bounded,buhrman1999quantum}.
As shown by our work, the computation power of multi-prover rational proofs is closely related to the query complexity of oracle Turing machines.
Finally, it is worth pointing out that $\sf{EXP^{NP}}$ is an important complexity class in
the study of circuit lower bounds~\cite{williams2014nonuniform}.
It would be interesting to see if the related class $\sf{EXP^{||NP}}$ emerges in similar contexts.

\section{Multi-Prover Rational Interactive Proofs}
\seclabel{mrip}

In this section, we first define multi-prover rational interactive proofs (MRIP) in general,
and then strengthen the model by imposing proper utility gaps.

\subsection{Basic Notation and Definitions}
Let $L$ be a language, $x$ a string whose membership in $L$ is to be decided, and $n= |x|$. 
An {\em interactive protocol} is a pair $(V, \vec{P})$, where
$V$ is the {\em verifier} and
$\vec{P} = (P_1,\ldots,P_{p(n)})$ is the vector of {\em provers}, and $p(n)$ a polynomial in $n$.
The verifier runs in polynomial time and flips private coins,
whereas each prover $P_i$ is computationally unbounded.
The verifier and provers know $x$.
The verifier can communicate with each prover privately, but no two provers can communicate with each other.
In a \emph{round}, either each prover sends a message to the verifier, or the
verifier sends a message to each prover, and these two cases alternate.  Without loss of generality, we assume the first round of messages are sent by the provers,
and the first bit sent by $P_1$, denoted by $c$, indicates whether $x\in L$ (corresponding to $c=1$) or not (corresponding to $c=0$).

The length of each message and the number
of rounds are polynomial in $n$.
Let $k(n)$ be the number of rounds and $r$ be the random string used by $V$.
For each $j\in \{1, 2, \dots, k(n)\}$, let $m_{ij}$ be the message exchanged between $V$ and $P_i$ in round $j$.
In particular, the first bit of $m_{11}$ is $c$.
The transcript that each prover $P_i$ has seen at the beginning of each round $j$
is $(m_{i1}, m_{i2}, \dots, m_{i(j-1)})$.
Let $\vec{m}$ be the vector of all messages exchanged in the protocol. By definition, $\vec{m}$ is a random variable depending on $r$.

At the end of the communication, the verifier
evaluates the total payment to the provers, given by
a payment function $R$ on $x$, $r$, and $\vec{m}$.
We restrict $R(x, r, \vec{m})\in [-1, 1]$ for convenience.
Of course, the payment can be shifted so that it is non-negative---that is, the provers do not lose money.
We use both positive and negative payments to better
reflect the intuition behind our protocols: the former are rewards while the latter are punishments.
The protocol followed by $V$, including the payment function~$R$, is public knowledge.

The verifier outputs $c$ as the answer for the membership of $x$ in $L$---that is, $V$ does not check the provers' answer.
This requirement for the verifier does not change the set of languages that have multi-prover rational interactive proofs; however, it simplifies our later discussion of utility gaps
 (i.e., the payment loss incurred by  provers that report the wrong answer).

\subsection{Cooperative Multi-Prover Rational Proofs}
Each prover $P_i$ can choose a {\em strategy} $s_{ij}:\{0,1\}^*\rightarrow\{0,1\}^*$
for each round $j$, which maps the transcript
he has seen up until the beginning of round $j$
to the message he sends in round $j$.
Note that $P_i$ does not send any
message when $j$ is even;  in this case $s_{ij}$ can be treated as a constant function.
Let $s_i = (s_{i1},\ldots, s_{ik(n)})$ be the strategy vector of $P_i$
and $s = (s_1,\dots, s_{p(n)})$ be the strategy profile of the provers.
Given any input $x$, randomness $r$ and strategy profile $s$, we may write the vector $\vec{m}$ of messages exchanged in the protocol
more explicitly as $(V, \vec{P})(x, r, s)$.

The provers are {\em cooperative} and jointly act to maximize the total expected payment received from the verifier.
Note that this is equivalent to each prover maximizing his own expected payment when each $P_i$
receives a pre-specified fraction $\gamma_i$ of the payment, where
$\sum_{i=1}^{p(n)} \gamma_i =1$ and $\gamma_i$ may depend on $x$ but not on $r$ and $\vec{m}$.

Thus, before the protocol starts, the provers pre-agree on a strategy profile $s$ that maximizes
\[
    u_{(V, \vec{P})}(s; x) \triangleq \mathop{\scalebox{1.25}{$\mathbb{E}$}}_r \left[R\left(x, r, (V, \vec{P})(x, r, s)\right)\right].
\]
When $(V, \vec{P})$ and $x$ are clear from the context, we write $u(s)$ for $u_{(V, \vec{P})}(s; x)$.
We define multi-prover rational interactive proofs as follows.

\begin{definition}[$\sf{MRIP}$]
    \label{def:mrip}
For any language $L$, an interactive protocol $(V, \vec{P})$ is a {\em multi-prover rational interactive proof (MRIP)} protocol for $L$ if,
for any $x\in \{0, 1\}^*$ and any strategy profile $s$ of the provers
such that $u(s) = \max_{s'} u(s')$, $c =1$ if and only if $x\in L$.
We denote the class of languages that have MRIP protocols by $\sf{MRIP}$.
\end{definition}

This definition immediately leads to the following property.

\begin{lemma}\label{lem:closed}
$\sf{MRIP}$ is closed under complement.
\end{lemma}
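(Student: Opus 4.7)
The plan is to construct, from any MRIP protocol $(V, \vec{P})$ for $L$, a new MRIP protocol $(V', \vec{P}')$ for $\bar{L}$ by reusing the same interaction and payment function but flipping how the answer bit is interpreted. Since the first bit $c$ sent by $P_1$ is the provers' claim about whether $x \in L$, and the provers are indifferent to this claim apart from how it affects their payment, we can simply relabel: the new first bit $c'$ will mean ``$x \in \bar{L}$,'' i.e., $c' = 1 - c$.

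Concretely, I would define $V'$ to behave exactly like $V$ except that, after receiving the provers' first round of messages with leading bit $c'$, the verifier internally constructs the transcript $\vec{m}$ obtained from $\vec{m}'$ by replacing the leading bit of $m_{11}'$ with $1 - c'$, computes the payment $R'(x, r, \vec{m}') := R(x, r, \vec{m})$, and outputs $c'$ as the answer for $\bar{L}$. The provers $\vec{P}'$ are defined to run $\vec{P}$ internally, but $P_1'$ flips its very first bit before sending. Because the rules of the protocol are public, this relabeling is common knowledge to the provers.

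To verify correctness, I would check the two conditions of \Cref{def:mrip}. First, for any strategy profile $\tilde{s}'$ of $\vec{P}'$ there is a strategy profile $\tilde{s}$ of $\vec{P}$ yielding exactly the same induced distribution over $\vec{m}$ (obtained by pre-flipping the first bit), and conversely; hence $u_{(V', \vec{P}')}(\tilde{s}'; x) = u_{(V, \vec{P})}(\tilde{s}; x)$, so the two protocols have the same set of payment-maximizing strategy profiles and the same optimal expected payment. This immediately gives $u(\tilde{s}') \geq 0$ at any optimum, since the original protocol guarantees non-negative optimal payment. Second, any payment-maximizing $\tilde{s}'$ corresponds to a payment-maximizing $\tilde{s}$ in the original protocol, for which $c = 1 \iff x \in L$; therefore $c' = 1 - c = 1 \iff x \notin L \iff x \in \bar{L}$, exactly as required.

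I do not expect any serious obstacle here: the first-bit ``answer'' is a purely syntactic convention, and because the payment function is preserved verbatim, the rationality condition transfers directly. The only point to be careful about is that the first bit must remain independent of $V$'s randomness (as required in the model), which is automatic since the provers simply invert their original first-bit strategy, and that the mapping between strategy profiles of the two protocols is a bijection preserving expected payment, which follows from the construction.
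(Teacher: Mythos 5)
Your proposal is correct and follows essentially the same route as the paper's proof: flip the interpretation of the answer bit, keep the payment function via the bit-flipped transcript, and use the payment-preserving bijection between strategy profiles of the two protocols to transfer both the non-negativity and the correctness conditions. The one detail worth keeping explicit (which the paper spells out and your "run $\vec{P}$ internally" phrasing implicitly handles) is that $P_1$'s later-round strategies must be composed with the bit flip on his own transcript, since his first message appears in the history he conditions on.
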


\begin{proof}
Consider a language $L \in \sf{MRIP}$. Let $(V, \vec{P})$ be the MRIP protocol for $L$, and $R$ the payment function used by $V$.
We construct an MRIP protocol $(V', \vec{P})$ for $\overline{L}$ as follows.

\begin{itemize}[noitemsep,nolistsep,leftmargin=*]
\item
After receiving message $m'_{11}$ from $P_1$, $V'$ flips the first bit. Denote the new message by $m_{11}$.

\item
$V'$ runs $V$ to compute the messages he should send in each round, except that $m'_{11}$ is replaced by $m_{11}$ in the input to $V$.
Let $\vec{m}'$ be the vector of messages exchanged between $V'$ and $\vec{P}$.

\item
At the end of the communication,
$V'$ computes a payment function $R'$: for any $x, r$, and $\vec{m}'$, $R'(x, r, \vec{m}') = R(x, r, \vec{m})$, where $\vec{m}$ is $\vec{m}'$ with
$m'_{11}$ replaced by $m_{11}$.

\item
$V'$ outputs the first bit sent by $P_1$.
\end{itemize}

To see why this is an MRIP protocol for $\overline{L}$, for each strategy profile $s$ of the provers in the protocol $(V, \vec{P})$, consider the following strategy profile $s'$ in the protocol $(V', \vec{P})$.
\begin{itemize}[noitemsep,nolistsep,leftmargin=*]

\item
$s'_i = s_i$ for each $i\neq 1$.

\item
In round 1, $s'_1$ outputs the same message as $s_1$, except that the first bit is flipped.

\item
For any odd $j>1$ and any transcript $m'_1$ for $P_1$ at the beginning of round $j$, $s'_1(m'_1)$ is the same as $s_1(m_1)$, where $m_1$ is $m'_1$ with the first bit flipped.
\end{itemize}

By induction,
for any $x$ and $r$, $(V', \vec{P})(x, r, s')$ is the same as $(V, \vec{P})(x, r, s)$
except the first bit. Thus $R'(x, r, (V', \vec{P})(x, r, s')) = R(x, r, (V, \vec{P})(x, r, s))$, which implies
$
u_{(V', \vec{P})}(s'; x) = u_{(V, \vec{P})}(s; x)$.
Since the mapping from $s$ to $s'$ is a bijection,
if we arbitrarily fix a strategy profile $s'$ that maximizes $u_{(V', \vec{P})}(s'; x)$,
the corresponding strategy profile $s$ maximizes $u_{(V, \vec{P})}(s; x)$.
By definition, $x\in L$ if and only if the first bit sent by $s_1$ is 1;
thus, $x\in \overline{L}$ if and only if the first bit sent by $s'_1$ is 1.
Therefore $(V', \vec{P})$ is an MRIP protocol for $\overline{L}$.
\end{proof}

Note that the MRIP protocols for $\overline{L}$ and $L$ have the same number of provers and the same number of rounds.
Moreover, recall that (assuming $\sf{NEXP}\neq\sf{coNEXP}$) the class of languages having classical multi-prover interactive proofs is not closed under
complement. Thus multi-prover rational proofs are already different from classical ones.

\subsection{MRIP Protocols with Constant and Polynomial Utility Gaps}\seclabel{gap}

In the MRIP model defined so far, the provers are sensitive to
arbitrarily small losses in the payment. That is, the provers choose~$s$ to just maximize their expected payment---the amount
they lose if they use a suboptimal strategy is irrelevant.

In \cite{azar2013super}, Azar and Micali strengthen the RIP model by requiring that the prover deviating from the
optimal strategy suffers a non-negligible loss in the payment. This loss is
demanded for {\em any} deviation,  not just for reporting an incorrect answer.
Formally, let $s$ be an optimal strategy
and $s'$ a suboptimal strategy
of the prover $P$. Then the {\em ideal} utility gap
requires that $u(s) - u(s') > 1/\alpha(n)$, where $\alpha(n)$ is constant
or polynomial in $n$.
Although an ideal utility gap
strongly guarantees that the prover uses his optimal strategy,
as pointed out by \cite{guo2014rational} such a utility gap appears to be too strong to hold for many meaningful protocols, even the ones in~\cite{azar2013super}.

%
In~\cite{guo2014rational}, Guo et al. define a weaker notion of utility gap and impose it on rational arguments rather than rational proofs.
They require that a noticeable deviation leads to a noticeable loss: if under a strategy $s'$ of the prover, the probability for the verifier to output the correct answer
is noticeably smaller than 1, then the expected payment to the prover under $s'$ is also noticeably smaller
than the optimal expected payment.

Our notion of utility gaps is slightly different,
and we require the provers' strategies that report the membership of the input incorrectly
suffer a noticeable loss in the payment.
Any MRIP protocol with our notion of utility gaps satisfy the notion of~\cite{guo2014rational}, but not vice-versa. 

\begin{definition}[Utility Gap]\label{def:rewardgap}
Let $L$ be a language in $\sf{MRIP}$, $(V, \vec{P})$ an
MRIP protocol for $L$, and $\alpha(n) \ge 0$. We say that $(V, \vec{P})$ has an {\em $\alpha(n)$-utility gap} if
for any input $x$ with $|x|=n$, any strategy profile
$s$ of $\vec{P}$ that maximizes the expected payment, and
 any other strategy profile $s'$, where the answer bit $c'$ under $s'$ does not match
the answer bit $c$ under $s$, i.e., $c'\neq c$, then
\[u(s) - u(s') > \frac{1}{\alpha(n)}.
\]
\end{definition}


We denote the class of languages that have an MRIP protocol with constant utility gap
by  $\cMRIP$,
and the class of languages that have an MRIP protocol with polynomial (or noticeable) utility gap
by $\pMRIP$. 
Specifically, 
$\pMRIP$ is the union of MRIP classes with $\alpha(n)$ utility gap,
where $\alpha(n)$ is a polynomial in $n$. $\cMRIP$ is defined analogously.

\pparagraph{Remark}
Since utility gap scales naturally with the payment,
it is important to maintain a fixed budget so as to study them in a consistent way. Otherwise, a polynomial utility gap
under a constant budget can be interpreted as a constant utility gap under a sufficiently-large polynomial budget. 
Thus, we maintain a constant budget and the payment is always in $[-1, 1]$.

Following~\defref{rewardgap}, it is not hard to see that the MRIP protocol for $\overline{L}$ in the proof of Lemma~\ref{lem:closed}
has the same utility gap as the one for $L$. Thus we immediately have the following.

\begin{corollary}\label{col:gap_complement}
$\cMRIP$ and $\pMRIP$ are both closed under complement.
\end{corollary}



%

\section{Warm Up: MRIP Protocols for $\sf{NEXP}$}\label{sec:nexp}
To demonstrate the power of multi-prover rational proofs, 
we start by constructing two different MRIP protocols for $\sf{NEXP}$, the class of languages
decidable by exponential-time non-deterministic Turing machines.



\subsection{A Constant-Gap MRIP Protocol for $\sf{NEXP}$ Based on MIP}\seclabel{nexp-using-mip}

First, we show that $\cMRIP$ contains $\sf{NEXP}$.
We construct the desired MRIP protocol using an MIP protocol 
as a blackbox.
Existing MIP protocols (see, e.g., \cite{babai1991non, feige1992two}) for a language $L\in \sf{NEXP}$ first reduce~$L$ to the $\sf{NEXP}$-complete problem $\OSAT$,
and then run an MIP protocol for $\OSAT$. For completeness, we recall the definition of $\OSAT$ below.

\begin{definition} [$\OSAT$ \cite{babai1991non}] Let $B$ be a 3-CNF of
$r+3s + 3$ variables.
A Boolean function $A:
\{0,1\}^s\rightarrow \{0,1\}$ is a {\em 3-satisfying oracle} for $B$ if
$B(w,A(b_1),A(b_2),A(b_3))$ is satisfied for all binary strings $w$ of length $r + 3s$, where $b_1b_2b_3$ are the
last $3s$ bits of $w$.   The $\OSAT$ problem is to decide, for a given $B$, whether there
is a 3-satisfying oracle for it.
\end{definition}


Below we prove that any language $L\in \sf{NEXP}$ has a 2-prover 3-round MRIP protocol with constant utility gap.

\begin{lemma}\label{lem:nexp-mip}
$\sf{NEXP} \subseteq \sf{MRIP} [2, 3, O(1)].$
\end{lemma}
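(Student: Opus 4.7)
The plan is to leverage the classical $\sf{MIP = NEXP}$ theorem in a black-box way: I will use the mandatory first-round bit $c$ sent by $P_1$ as a switch between ``run an MIP protocol for $L$'' and ``do nothing,'' and design a payment scheme that makes the truthful choice of $c$ strictly optimal.

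First, I will fix $L \in \sf{NEXP}$ and, invoking \cite{babai1991non,feige1992two}, pick an MIP protocol $(V_M, \vec{P}_M)$ for $L$ with perfect completeness and soundness at most $1/2$. Then I will build an MRIP protocol $(V, \vec{P})$ whose first round has $P_1$ broadcast $c$ (and the other provers send a dummy symbol), followed by: if $c = 1$, the remaining rounds simulate $(V_M, \vec{P}_M)$ on input $x$ with each MRIP prover $P_i$ playing the role of the corresponding MIP prover---the MRIP no-communication rule aligns with MIP's, so the simulation is faithful; if $c = 0$, no further rounds occur.

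Next, I will specify the payment $R$. For $c = 1$, set $R = +1$ when $V_M$ accepts and $R = -1$ when $V_M$ rejects; for $c = 0$, set $R = \alpha$ for some fixed constant $\alpha \in (0, 1/2)$, for instance $\alpha = 1/4$. Checking Definition \ref{def:mrip} is then a short four-case comparison. If $x \in L$, the honest strategy (``$c = 1$, play MIP faithfully'') yields expected payment $+1$ by perfect completeness, while any strategy with $c = 0$ yields only $\alpha < 1$; hence every expectation-maximizing profile has $c = 1$, matching $L$, with optimum $1 \ge 0$. If $x \notin L$, any strategy with $c = 1$ gives $\mathbb{E}[R] = 2\Pr[V_M\text{ accepts}] - 1 \le 0$ by MIP soundness, while $c = 0$ gives $\alpha > 0$; hence every maximizing profile has $c = 0$, again correct, with optimum $\alpha \ge 0$.

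The main thing to get right---really the only possible obstacle---is ensuring that plugging the MIP protocol in as a subroutine does not break the MRIP rules, and that the MIP soundness is strictly bounded away from $1$ so that the $\pm 1$ payment leaves cheating strategies strictly below $\alpha$. Both are immediate: the MIP protocol of \cite{feige1992two} is already two-prover and one-round with soundness $1/2$, so no amplification or delicate accounting is needed, and the proof reduces to wrapping a known MIP protocol with this payment gadget.
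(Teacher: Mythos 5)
Your proposal is correct and follows essentially the same route as the paper: use the first bit $c$ as a switch, run a black-box MIP protocol for $L$ when $c=1$, and pay a fixed intermediate amount when $c=0$ so that truthful reporting of $c$ is strictly optimal in both cases. The only differences are in the constants (the paper pays $1$, $1/2$, $0$, keeping all payments nonnegative and needing only the standard $2/3$ vs.\ $1/3$ completeness/soundness bounds, whereas you pay $\pm 1$ and $\alpha=1/4$ and invoke perfect completeness), which do not change the argument.
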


\begin{proof}
The desired MRIP protocol $(V, \vec{P})$ is defined in \figref{simple-nexp}.

\begin{figure}[phtb]
\centering
\fbox{
\begin{minipage}{0.96\textwidth}
{\normalsize
\vspace{0.5ex}
\noindent{}For any input string $x$, $(V, \vec{P})$ works as follows:
\begin{enumerate}[leftmargin=15pt]
\item $P_1$ sends a bit $c\in \{0,1\}$ to $V$.  $V$ outputs $c$ at the end of the protocol.

\item\label{simple-2} If $c=0$, then the protocol ends and the payment given to the provers is $R = 1/2$;

\item\label{simple-3}
Otherwise, $V$ and $\vec{P}$ run an MIP protocol for proving $x \in L$.
If the verifier accepts then $R = 1$; else, $R = 0$.
\end{enumerate}
}
\end{minipage}
}
\caption{A simple MRIP protocol for $\sf{NEXP}$.}
\figlabel{simple-nexp}
\end{figure}

The 2-prover 3-round MRIP protocol is obtained by running the MIP protocol in~\cite{feige1992two}.
Without loss of generality, let the MIP protocol have completeness 1 and soundness $1/3$.
That is, the verifier accepts every $x\in L$ with probability 1, and every $x\notin L$
with probability at most $1/3$. We show that $V$ outputs~$1$ if and only if $x \in L$.

For any $x \in L$, if the provers send $c=1$ and execute the MIP protocol with $V$, then the payment is $R =1$ because $V$ accepts with probability 1.%
\footnote{If the MIP protocol does not have perfect completeness and accepts $x$ with probability at least $2/3$, then the expected payment is at least $2/3$. This does not affect the correctness of our MRIP protocol.}
If they send $c=0$, then the payment is $R =1/2<1$. 

For any $x\not\in L$, if the provers send $c=1$ and run the MIP protocol,
then the probability that $V$ accepts is at most $1/3$ and
the expected payment is at most $1/3$.
If they send $c=0$, then the payment is $1/2>1/3$. 

Thus, $V$ outputs 1 iff $x\in L$, and $(V, \vec{P})$ is an MRIP protocol for $L$.
Since the provers' payment loss when sending the wrong answer bit is at least
 $1/6$, $(V, \vec{P})$ has $O(1)$ utility gap.
\end{proof}

Combining Corollary \ref{col:gap_complement} and Lemma \ref{lem:nexp-mip}, we have the following.

\begin{corollary} 
$\sf{coNEXP} \subseteq \sf{MRIP} [2, 3, O(1)].$
\end{corollary}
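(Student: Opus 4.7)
The plan is to derive the corollary by chaining the two preceding results: $\sf{NEXP} \subseteq \sf{MRIP}$ from \Cref{lem:nexp-mip} and the closure of $\sf{MRIP}$ under complement from \Cref{lem:closed}. Concretely, for any language $L \in \sf{coNEXP}$ we have $\overline{L} \in \sf{NEXP}$ by definition of $\sf{coNEXP}$; so \Cref{lem:nexp-mip} supplies an MRIP protocol $(V, \vec{P})$ for $\overline{L}$, and then \Cref{lem:closed}, applied to $\overline{L}$, produces an MRIP protocol $(V', \vec{P})$ for $\overline{\overline{L}} = L$. Hence $L \in \sf{MRIP}$, which is the desired containment.

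Since both inputs are already proved, no new machinery or case analysis is required, and in particular there is no real obstacle to overcome. If desired, one can spell out the resulting protocol directly: the verifier $V'$ simulates the MRIP protocol for $\overline{L}$ constructed in \Cref{lem:nexp-mip} (which itself wraps a black-box MIP protocol for an $\sf{NEXP}$-complete language equivalent to $\overline{L}$), but flips the first bit sent by $P_1$ before feeding it into $V$ and also negates it in the output, while reusing the same payment function on the ``unflipped'' transcript. The bijection between strategy profiles used in the proof of \Cref{lem:closed} guarantees that the optimal expected payment is unchanged and that the reported bit now correctly indicates membership of $x$ in $L$ rather than in $\overline{L}$. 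This gives an explicit MRIP protocol for $L$, confirming $\sf{coNEXP} \subseteq \sf{MRIP}$.
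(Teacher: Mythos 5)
Your proof is correct and matches the paper's argument exactly: the paper also obtains this corollary immediately by combining \Cref{lem:nexp-mip} ($\sf{NEXP}\subseteq\sf{MRIP}$) with \Cref{lem:closed} (closure under complement). Your optional explicit protocol is likewise what the paper alludes to in the remark following the corollary.
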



\pparagraph{Remarks}
Three rounds of interaction is the best possible for any non-trivial MRIP protocol with at least two provers, because $P_1$ always sends the answer $c$ in the first round. In particular, if the protocol has only two rounds,
then the last round consists of the verifier sending messages to the provers and can be eliminated. A single-round MRIP protocol 
degenerates into a single-prover rational proof protocol, since the provers can pre-agree on the messages.


The constant utility gap in our MRIP protocol
comes from the constant soundness gap of classical MIP protocols ---that is, the gap between the accepting probability for $x\in L$ and $x\notin L$.
Using the same construction, any classical interactive proof protocol can be converted into an MRIP protocol
where the utility gap is a constant fraction of the original soundness gap.
However, as we show in Section \ref{sec:const-polygap}, this is not the only way to obtain desirable utility gaps.


\subsection{An MRIP Protocol for $\sf{NEXP}$ Based on Scoring Rules}\label{sec:nexp-scoring}

Although our protocol above is conceptually simple,
its implementation relies on MIP protocols, which are technically complex even after the reduction to $\OSAT$.
We now construct an MRIP protocol for any language in $\sf{NEXP}$ without relying on
MIP protocols. Instead, we use a {\em proper scoring rule} to compute the payment for the provers, 
so as to incentivize them to report the correct answer.
However, the way we use the scoring rule is highly non-standard and differs from {\em all} previous uses of scoring rules (including those in rational proofs~\cite{azar2012rational, azar2013super, guo2014rational}).
Let us first recall the notion of proper scoring rules and
{\em Brier's scoring rule}~\cite{brier1950verification} in particular.

\pparagraph{Proper Scoring Rules} Scoring rules are tools to assess the quality
of a probabilistic forecast. It assigns a numerical score (that is, a payment
to the forecaster) to the forecast based on the predicted distribution and the sample
that materializes.  More precisely, given any probability space $\Sigma$,
letting $\Delta(\Sigma)$ be the set of probability distributions over $\Sigma$,
a {\em scoring rule} is a function from $\Delta(\Sigma)\times \Sigma$ to
${\mathbb R}$, the set of reals.  A scoring rule $S$ is {\em proper} if, for
any distribution $D$ over $\Sigma$ and distribution $D' \neq D$, we have \[
\sum_{\omega\in \Sigma} D(\omega)S(D,\omega) \geq \sum_{\omega\in \Sigma}
D(\omega)S(D',\omega), \] where $D(\omega)$ is the probability that $\omega$ is
drawn from $D$.  A scoring rule $S$ is {\em strictly proper} if the above
inequality is strict.
Strictly proper scoring rules are useful because 
a forecaster maximizes his expected score (i.e. payment) 
by reporting the true distribution $D$.  See
\cite{gneiting2007strictly} for a comprehensive survey on scoring rules.

\pparagraph{Brier's Scoring Rule} This classic
scoring rule, which we abbreviate to $\sf{BSR}$, is defined as follows: for any
distribution $D$ and $\omega \in \Sigma$,
\[ \sf{BSR}(D,\omega) = 2D(\omega) - \sum_{\omega\in\Sigma} D(\omega)^2 - 1.
\]
It is well known that $\sf{BSR}$ is strictly proper.

$\sf{BSR}$ requires the computation of $\sum_{\omega\in\Sigma}
D(\omega)^2$, which can be hard when $|\Sigma|$ is large. However, as in
\cite{azar2012rational} and \cite{guo2014rational}, in this work we only consider
$\Sigma = \{0,1\}$.

$\sf{BSR}$ has range $[-2,0]$, but it can be easily
shifted and scaled so that (1) the range is non-negative and bounded, and (2)
the resulting scoring rule is still strictly proper.  In particular, we
add 2 to the classical BSR score when using it, so as to satisfy these requirements.

\smallskip

Next, we construct a simple and efficient MRIP protocol for $\OSAT$.  As in classical MIP protocols, an MRIP protocol for any language $L\in
\sf{NEXP}$ can be obtained by first reducing $L$ to $\OSAT$ and then using our
protocol.  
As our protocol is highly efficiently, the complexity of the overall protocol
for $L$ is the same as the reduction.  Our protocol for $\OSAT$ is defined in \figref{scoring}, and we have the following lemma.

\begin{figure}[htbp]

\centering

\fbox{ \begin{minipage}{0.96\textwidth}

{\normalsize \vspace{0.0ex} \noindent{}For any instance $B$, the protocol $(V,
\vec{P})$ works as follows: 
\begin{enumerate}[leftmargin=15pt]
\item \label{step-1} $P_1$ sends
$c\in \{0, 1\}$ and $a \in \{0, 1, \ldots,2^{r+3s}\}$ to $V$.  $V$ outputs
$c$ at the end of the protocol.

\item If $c=1$ and $a< 2^{r+3s}$, or if $c=0$ and $a=2^{r+3s}$, the protocol
ends, and $R = -1$.

\item \label{prover1assignment}
Otherwise, $V$ uniformly and randomly chooses two binary strings of length $r + 3s$, $w = (z,b_1,b_2,b_3)$ and $w' = (z',b_4,b_5,b_6)$, as well as a number
$k \in \{1,2, \dots, 6\}$.

$V$ sends
$b_1,b_2,b_3,b_4,b_5,b_6$ to $P_1$ and $b_k$ to $P_2$.

\item \label{step-4}
$P_1$ sends to $V$ six bits, $A(b_i)$ with $i\in \{1, 2,\dots, 6\}$, and
$P_2$ sends one bit, $A'(b_k)$.

\item
The protocol ends and $V$ computes the payment $R$ as follows.
\begin{enumerate}
\item \label{step-5a}
If $A(b_k)\neq A'(b_k)$ then $R = -1$.

\item \label{step-5b}
Otherwise, if $B(z,b_1,b_2,b_3,A(b_1),A(b_2),A(b_3)) = 0$ then $R=0$.

\item \label{step-5c}
Else, let $b = B(z',b_4,b_5,b_6,A(b_4),A(b_5),A(b_6))$, $p_1 = a/2^{r+3s}$, and $p_0 = 1-p_1$.

$V$ computes $R$ using BSR. If
$b=1$, $R = \frac{2p_1 - (p_1^2 + p_0^2) + 1}{11}$, else $R = \frac{2p_0 - (p_1^2 + p_0^2) + 1}{11}$.
\vspace{1ex}
\end{enumerate}
\end{enumerate}

}
\end{minipage}
}

\caption{A simple and efficient MRIP protocol for $\OSAT$.}
\figlabel{scoring}
\end{figure}

\begin{lemma}
    \lemlabel{nexpmrip}
    $\OSAT$ has a 2-prover 3-round MRIP protocol where, for any instance $B$ of length~$n$, the randomness used by the verifier, the computation complexity,
and the communication complexity of the protocol are all $O(n)$.
Moreover, the evaluation of the payment function consists of constant number of arithmetic operations over $O(n)$-bit numbers.
\end{lemma}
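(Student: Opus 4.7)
The strategy is to design a three-round, two-prover protocol centred on Brier's scoring rule and then verify both its incentive compatibility and the stated efficiency bounds. In round 1, $P_1$ sends an answer bit $c \in \{0, 1\}$ together with a probability $p \in [0, 1]$ represented with $O(n)$ bits, where $p$ is meant to be the true success probability of the best oracle the provers commit to. In round 2, the verifier samples $w \in \{0, 1\}^{r+3s}$ uniformly, extracts the three oracle queries $b_1, b_2, b_3$ from the last $3s$ bits of $w$, forwards $(b_1, b_2, b_3)$ to $P_2$, and forwards $b_j$ for a uniformly random $j \in \{1, 2, 3\}$ to $P_1$ as a consistency probe. In round 3, $P_2$ returns the claimed values $(y_1, y_2, y_3) = (A(b_1), A(b_2), A(b_3))$ and $P_1$ returns its own $A(b_j)$. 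The verifier computes $\omega := B(w, y_1, y_2, y_3) \in \{0, 1\}$ and pays a shifted and scaled version of $\sf{BSR}(p, \omega)$ (with $p$ forced to $1$ when $c = 1$), zeroing the payment if the two provers' values for $A(b_j)$ disagree and using a minor tie-breaking adjustment (e.g.\ enforcing $p < 1$ when $c = 0$) to separate the $c = 1$ strategy from the $c = 0, p = 1$ strategy.

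For correctness I would split on whether $B \in \OSAT$. If $B \in \OSAT$, a 3-satisfying oracle $A^*$ exists, and the uniquely payoff-maximizing strategy is for both provers to pre-agree on $A^*$ and for $P_1$ to declare $c = 1$; this makes $\omega = 1$ deterministically and attains the maximum BSR payment, while any deviation strictly reduces the expected payment by strict properness of BSR together with the tie-breaking rule. If $B \notin \OSAT$, let $q^* := \max_A q_A < 1$, where $q_A$ denotes the fraction of $w$'s for which $B(w, A(b_1), A(b_2), A(b_3))$ holds. A short calculation shows that under $c = 0$ with $p = q^*$ the expected payment exceeds that under any $c = 1$ strategy by an amount proportional to $(1 - q^*)^2 > 0$, so $c = 0$ is strictly optimal and the verifier outputs the correct answer. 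The cross-check with $P_1$ is what forces rational provers onto a single consistent oracle at equilibrium, eliminating the loophole in which a non-decomposable response $\sigma$ of $P_2$ could inflate $q_\sigma$ above $\max_A q_A$.

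The resource bounds fall out of the protocol: $O(n)$ random bits to sample $w$, $O(n)$ communication in aggregate across all messages, and $O(n)$ computation to evaluate the 3-CNF $B$ at a specific assignment, while the payment involves only $p^2$ and a constant number of additions and multiplications on $O(n)$-bit rationals. The main obstacle I anticipate is precisely the decomposability issue: $P_2$ sees the entire triple $(b_1, b_2, b_3)$ at once and, without the cross-check, could adopt a response function $\sigma$ that no single oracle $A$ realizes, so that $q_\sigma = 1$ even when $B \notin \OSAT$. Engineering the cross-check so that it fits within the tight two-prover, three-round budget and proving, via the strict properness of $\sf{BSR}$, that it forces a consistent-oracle equilibrium is the delicate part of the proof.
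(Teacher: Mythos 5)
Your overall architecture (Brier's scoring rule on the satisfaction bit of a random clause, plus a cross-check between the two provers) matches the paper's, but as written the protocol has two genuine breaks. The first and most important one is that you sample only \emph{one} random string $w$ and use it both to elicit the oracle values and to score the prediction $p$. Because the (shifted) expected Brier score $2(D(1)^2-D(1))+2$ is symmetric about $D(1)=1/2$ and attains its maximum at \emph{both} $D(1)=0$ and $D(1)=1$, the following strategy ties exactly with honest play whenever $B\in\OSAT$ but some oracle $A'$ falsifies every $w$ (e.g.\ $B=(z_1\vee z_1\vee z_1)$, where $A\equiv 1$ satisfies all $w$ and $A\equiv 0$ satisfies none): the provers commit to $A'$, report $c=0$ and $p=0$, the outcome is $\omega=0$ with certainty, and they collect the same maximal score as the truthful $c=1,p=1$ strategy. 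This maximizing strategy profile reports the wrong bit, violating Definition~\ref{def:mrip}; your tie-breaking rule ($p<1$ when $c=0$) does not touch it, since the bad strategy has $p=0$. The paper's protocol avoids exactly this by drawing a \emph{second, independent} string $w$ and paying the scoring rule only if $B$ is satisfied on that string under the committed oracle (step~\ref{step-5b} of \figref{scoring}); this multiplies the expected score by $D(1)$, making the expected payment $\propto D(1)(2D(1)^2-2D(1)+2)$, which is strictly increasing in $D(1)$, so committing to a 3-satisfying oracle is uniquely optimal. Without some such asymmetrization, ``strict properness of $\sf{BSR}$'' cannot carry the argument, because the cheating provers change the outcome distribution itself rather than merely misreporting it.

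The second gap is that zeroing the payment on a failed cross-check is too weak a punishment. Suppose $B\notin\OSAT$ with best achievable satisfaction fraction $q^*=1/2$. Honest play pays (shifted $\sf{BSR}$ in $[0,2]$) expected value $2q^{*2}-2q^*+2=1.5$. Instead, let $P_1$ play some oracle $A$ with $q_A=q^*$ and report $c=1$, $p=1$, while $P_2$ answers the triple so that $\omega=1$ always, deviating from $A$ on one coordinate for the unsatisfied $w$'s; the deviation is probed with probability only $1/3$, so the expected payment is at least $2q^*+(1-q^*)\cdot\frac{2}{3}\cdot 2=5/3>1.5$, and the verifier is misled. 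The paper handles this by paying $-1$ on any detected inconsistency and scaling the scoring-rule payments down (the $1/11$ factor), so that conditioned on any inconsistent answer the expected payment is $-q+(1-q)R<0$ with $q\ge 1/6$ and $R\le 2/11$, strictly worse than consistent play. You would need both fixes---a strictly negative, suitably scaled inconsistency penalty and the independent ``filter'' sample---before the incentive analysis goes through; the efficiency claims ($O(n)$ randomness, communication, and verification plus $O(1)$ arithmetic operations) are fine and match the paper.
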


\begin{proof}
For any instance $B$ with $r+3s+3$ variables (thus $n\geq r+3s+3$),
the provers can,
with their unbounded computation power, find an oracle $A^*$ that maximizes the number of satisfying $(r+3s)$-bit strings for $B$. Denote this number by $a^*$.
If $B\in \OSAT$ then $a^* = 2^{r+3s}$, otherwise $a^*< 2^{r+3s}$.

Roughly speaking, in our MRIP protocol in~\figref{scoring}, the verifier incentivizes
the provers to report the correct value of $a^*$, so that the membership of $B$ can be decided.
To see why this is the case,
let $s^*$ be one of the best strategy profiles of the provers.
Then $s^*$
must satisfy
\begin{equation}\label{equ:1}
\mbox{either } c=1 \mbox{ and } a=2^{r+3s}, \mbox{ or } c=0 \mbox{ and } a<2^{r+3s}.
\end{equation}
Otherwise, the provers' expected payment is $-1$. Meanwhile, by sending $c=0$ and $a=0$ in Step~\ref{step-1}
and all 0's in Step \ref{step-4}, their expected payment is 0.

Now we consider which of the two cases in Equation \ref{equ:1} the provers should report.
Note that $P_2$ only answers one query of the verifier (in Step \ref{step-4}). 
Thus under any strategy $\tilde{s}_2$ and given any $c$ and~$a$, $P_2$ de facto commits to an
oracle $A': \{0,1\}^{s} \rightarrow \{0,1\}$.  Assume that
$P_1$, using a strategy $\tilde{s}_1$ and seeing $(b_1,...,b_6)$,
sends $V$ six bits in Step \ref{step-4} that are not consistent with $A'$ ---that is,
there exists $i\in\{1,\ldots,6\}$ such that
$A(b_i)\neq A'(b_i)$.
Let $q$ be the probability that, conditioned on $(b_1,...,b_6)$, the verifier chooses a $k$ that catches the provers in Step \ref{step-5a};
we have $q\geq 1/6$.
Let $R$ be the payment to the provers conditioned on $(b_1,...,b_6)$ and on the event that they are not caught in Step \ref{step-5a}.
Note that $R \leq \frac{2}{11}$ by the definition of Brier's scoring rule.
Thus the expected payment to the provers conditioned on $(b_1,...,b_6)$ is $-q + (1-q)R<0$.
However, if $P_1$ answers the verifier's queries consistently with $A'$, their expected payment conditioned on $(b_1,...,b_6)$ is non-negative.
Accordingly, the best strategy profile $s^*$ must be such that, for any $c$, $a$ and the oracle committed by $P_2$, $P_1$'s answers for
any $(b_1,...,b_6)$ are always consistent with $A'$.
Thus, under $s^*$ the payment is never computed in
Step \ref{step-5a}.

Whether or not $B$ evaluates to 0 in Step \ref{step-5b} is determined solely
by $b_1$, $b_2$, $b_3$ and $A'$.
If $B$ evaluates to 0, then it does not matter what $a$ or $c$ is, and the provers' received payment is 0. 
If $B$ does not evaluate to 0 in Step \ref{step-5b}, then the expected payment to the provers in Step \ref{step-5c}
is defined by Brier's scoring rule: the true distribution of $b$, denoted by $D$, is such that
$D(1) = a'/2^{r+3s}$,
with $a'$ being the number of satisfying $(r+3s)$-bit strings for $B$ under oracle $A'$; the realized value is
$b = B(z',b_4,b_5,b_6,A(b_4),A(b_5),A(b_6))$; and the reported distribution is $(p_1, p_0)$.
Indeed, since $b_4, b_5, b_6$ are independent from $b_1, b_2, b_3$, we have that $w'$ is a uniformly random input to $B$, and the probability for $b$ to be 1 is exactly $a'/2^{r+3s}$.
Since Brier's scoring rule is
strictly proper, conditioned on $A'$, the provers maximize the expected payment by reporting
\begin{equation}\label{equ:2}
a=a',
\end{equation}
 which implies $(p_1, p_0) = (D(1), D(0))$.

If $B\not\in \OSAT$, then no matter which oracle $A'$ is committed under $s^*$, we have $a'< 2^{r+3s}$.
By Equations \ref{equ:1} and \ref{equ:2}, $a<2^{r+3s}$ and $c=0$
as desired.

If $B\in \OSAT$, which is the more interesting part, 
we show that under $s^*$ prover $P_2$ commits to the desired 3-satisfying oracle $A^*$ (so that $a' = 2^{r+3s}$ and $D(1)=1$).
Let $\sf{BSR}(D)$ denote the expected score for reporting $D$ under BSR, when $D$ is the true distribution.
\begin{eqnarray}\label{equ:BSR_3}
\sf{BSR}(D) &=& D(1)[2D(1)-D(1)^2-(1-D(1))^2-1]  \\
& & + (1-D(1))[2(1-D(1))-D(1)^2-(1-D(1))^2-1] \nonumber \\
 &=& 2(D(1)^2-D(1)). \nonumber
\end{eqnarray}
Thus $\sf{BSR}(D)$ is symmetric at $D(1)=1/2$, strictly decreasing on $D(1)\in [0, 1/2]$, strictly increasing on $D(1)\in [1/2, 1]$, and maximized when $D(1) = 1$ or $D(1) = 0$.
Note that the shifting and scaling of $\sf{BSR}$ in Step \ref{step-5c} do not change these properties, but make $\sf{BSR}(D)$ strictly positive when $D(1)=1$ or $D(1)=0$.
Therefore, to maximize their expected payment conditioned on the event that Step \ref{step-5c} is reached, $P_2$ should commit to either an oracle $A'$ such that $D(1)$ is as small as possible, or an $A'$ such that $D(1)$ is as large as possible, whichever makes $D(1)$ further from $1/2$.

If there is no oracle $A'$ such that $a'=0$, then the only way for the provers to maximize their expected payment is to commit to the 3-satisfying oracle $A^*$ (thus $a'=1$), under which Step \ref{step-5c} is reached with probability 1. Again by Equations \ref{equ:1} and \ref{equ:2}, we have $c=1$ and $a = 2^{r+3s}$.

If there are both a 3-satisfying oracle $A^*$ and an oracle $A'$ such that $a'=0$, we need to make sure that $P_2$ does not commit to $A'$.
To do so, we use $w$ along with Step \ref{step-5b}.
In particular, committing to any oracle other than $A^*$ or $A'$ 
results in an expected payment strictly smaller than that by committing to $A^*$,
since it increases the probability that the protocol ends at Step \ref{step-5b} with $R=0$,
and strictly decreases the expected payment conditioned on Step \ref{step-5c} being reached.
Moreover, if $P_2$ commits to $A'$, then $B$ {\em always} evaluates to 0 in Step \ref{step-5b},
 and Step \ref{step-5c} is actually never reached.
 Thus, even though by committing to $A'$ the provers maximize their expected payment in Step \ref{step-5c}, their actual expected payment is 0.
Instead, by committing to $A^*$, Step \ref{step-5c} is reached with probability 1 and the provers get positive payment.
Accordingly, the strategy profile $s^*$ must be such that
$P_2$ commits to $A^*$ and $P_1$ sends $a=2^{r+3s}$ and $c=1$,
as desired.
If there are multiple 3-satisfying oracles for $B$,
then the provers can pre-agree on any one of them (by taking the first in lexicographical order, for example).

In sum,
$(V, \vec{P})$ is an MRIP protocol for $\OSAT$.
Since $n\geq r+3s+3$, the number of coins flipped by $V$ for sampling $w$, $w'$, and $k$ is $O(n)$,
and so is the number of bits exchanged between $V$ and $\vec{P}$.
Moreover, given an input string $w = (z,b_1,b_2,b_3)$ for $B$ and the 3-bit answers of the oracle for $b_1, b_2, b_3$,
$B$ can be evaluated in linear time. Thus the running time of $V$ is $O(n)$ plus a constant number of arithmetic operations to compute the payment in Step \ref{step-5c}.
\end{proof}

\pparagraph{Remarks}
There is a tradeoff between the utility gap and the computational efficiency in the two MRIP protocols we have constructed for $\sf{NEXP}$.
The protocol in \figref{simple-nexp} has constant utility gap but relies on the MIP protocol, which has high (even though polynomial) communication and computation overheads beyond the reduction to $\OSAT$.
On the other hand, the protocol in \figref{scoring} is very efficient,
with just linear computation and communication overheads beyond the reduction to $\OSAT$,
but has exponential utility gap.
It would be interesting to see if there exists an MRIP protocol for $\sf{NEXP}$ that has constant or noticeable utility gap and is highly efficient
 (e.g., with linear overhead beyond the reduction to $\OSAT$).

To the best of our knowledge, the property of~$\sf{BSR}$ in Equation~\ref{equ:BSR_3} has never been discussed in the literature. 
All existing uses of proper scoring rules
are with respect to a fixed distribution and have the expert report the truth about that distribution.
In contrast, our use of scoring rules compares the expected scores across \emph{different} distributions: 
by committing to different oracles, the expert can choose which distribution is the true distribution,
and can tell the truth about that distribution to maximize his corresponding score. 
The correctness of our protocol depends on the expert 
committing to the distribution with the highest score under truth-telling.

\section{MRIP with Constant and Noticeable Utility Gap}\label{sec:const-polygap}

We have shown in~\secref{nexp} that the class of MRIP protocols with constant utility gaps contains both $\sf{NEXP}$ and $\sf{coNEXP}$, making them
more powerful than classic MIP protocols. In this section, we characterize the exact power of the classes
of MRIP protocols with constant and polynomial utility gaps.
That is,
we prove~\thmref{constantchar} and~\thmref{polychar}: 
\[
    \cMRIP = \sf{P^{||NEXP[O(1)]}} \quad \mbox{ and  } \quad \pMRIP = \sf{P^{||NEXP}}.
\]

%

To do so, let $\alpha(n)$ be a function of $n$, which (1) only takes positive integral values, 
(2) is upper-bounded by a polynomial in $n$, and (3) is polynomial-time computable.%
\footnote{To prove~\thmref{constantchar} and~\thmref{polychar},
 we only need $\alpha(n)$ to be constant or polynomial in $n$.
However, the lemmas in this section hold for all $\alpha(n)$'s that are polynomial-time computable (given $1^n$) and polynomially bounded. 
For example, $\alpha(n)$ can be $\log n$, $\sqrt{n}$, etc.}
We refer to the class of languages that have an MRIP protocol with $O(\alpha(n))$ utility gaps as $\aMRIP$,

Recall that $\sf{P^{||NEXP[\alpha(n)]}}$ is the class of languages decidable by polynomial-time Turing machines
making $O(\alpha(n))$ non-adaptive queries to an $\sf{NEXP}$ oracle. We prove tight upper- and lower-bounds
on the power of the class $\aMRIP$.

%

\begin{lemma}\label{lem:notc-lower}
$ \sf{P^{||NEXP[\alpha(n)]}} \subseteq \aMRIP$.
\end{lemma}

\begin{proof}
Consider any language $L \in \sf{P^{||NEXP[\alpha(n)]}}$.
Let $M$ be a polynomial-time Turing machine deciding $L$, with access to an oracle $O$ for an $\sf{NEXP}$ language.
Without loss of generality,
$M$ makes exactly $\alpha(n)\geq 1$ non-adaptive queries to $O$.
The MRIP protocol for $L$ uses our MRIP protocol for $\sf{NEXP}$ to simulate the oracle, as in~\figref{constantgap}.

%

\begin{figure}[phtb]

\centering
\fbox{
\begin{minipage}{0.96\textwidth}
{\normalsize
\vspace{0.5ex}
\noindent{}For any input string $x$ of length $n$, the protocol $(V, \vec{P})$ works as follows.
Initially $R_n = 0$.
\begin{enumerate}[leftmargin=15pt]
\item $P_1$ sends a bit $c\in\{0,1\}$ to $V$.  $V$ outputs $c$ at the end of the protocol.

\item $V$ simulates $M$ on $x$ till $M$ outputs $\alpha(n)$ queries for $O$, denoted by $q_1,\dots, q_{\alpha(n)}$.
\item To answer $M$'s oracle queries, for each
$i\in \{1, 2,\dots, \alpha(n)\}$, $V$ does the following:
\begin{enumerate}
\item $V$ first reduces $q_i$ to an $\OSAT$ instance $\phi_i$ (whose length is polynomial in $n$).
\item\label{step:nexp} $V$ sends $\phi_i$ to $P_1$ and $P_2$ and
executes the MRIP protocol for $\sf{NEXP}$ in~\figref{simple-nexp}.
Let $c^*_i$ and $R^*_i$ be the answer bit and the payment in that protocol respectively.
$V$ returns $c^*_i$ as the oracle's answer for $q_i$,
and
updates the sum $R_n \leftarrow R_n + R^*_i$.

\end{enumerate}
\item\label{step:final}
$V$ continues simulating $M$ till the end.
If $c$ does not match $M$'s output, then the protocol ends with reward $R =-1$;
otherwise the protocol ends with $R = R_n/\alpha(n)$.
\end{enumerate}
}
\end{minipage}
}
\caption{An MRIP protocol for $\sf{P^{||NEXP[\alpha(n)]}}$.}
\figlabel{constantgap}
\end{figure}

To see why this protocol works, first note that reporting the correct answer bit $c$ and answering all $\alpha(n)$ $\sf{NEXP}$
queries $q_1,\dots, q_{\alpha(n)}$ correctly leads to a reward $R \ge 1/2$ for the provers. In particular,
according to our protocol in~\figref{simple-nexp} and the proof of Lemma \ref{lem:nexp-mip},
if the provers use the optimal strategy 
for each query $q_i$ (which includes sending the correct answer bit $c^*_i$),
the provers get $R^*_i = 1$ if $\phi_i \in \OSAT$
and $R^*_i = 1/2$ if $\phi_i \notin \OSAT$.

Now, suppose the
provers report an incorrect answer bit $c' \neq c$ at the beginning.
Then, either (a) the output of $M$ in Step~\ref{step:final} does not match $c'$, and thus $R=-1$; or
(b) there exists an $\sf{NEXP}$ query $q_i$ such that the answer bit $c^*_i$
in Step~\ref{step:nexp} is incorrect.

In case (a), the provers' expected payment loss is at least $1/2+1 = 3/2>1/\alpha(n)$,
as $\alpha(n) \geq 1$. In case (b),
because the protocol in~\figref{simple-nexp} has $O(1)$ utility gap,
the provers' expected payment loss in the overall protocol is at least
$1/O(\alpha(n))$.
Thus, the provers' optimal strategy is to report the correct answer bit $c$ and to answer all $\alpha(n)$ $\sf{NEXP}$ queries correctly.
\end{proof}

To complete the characterization, we prove a tight upper-bound for $\aMRIP$.

\begin{lemma}\label{lem:notc-upper}
$ \aMRIP \subseteq \sf{P^{||NEXP[\alpha(n)]}}$.
\end{lemma}

\begin{proof}
Given any $L \in \aMRIP$, let $(V, \vec{P})$ be
the MRIP protocol with $O(\alpha(n))$ utility gap for~$L$. 
Again without loss of generality, assume the utility gap is exactly $\alpha(n)$.
To prove Lemma \ref{lem:notc-upper}, we simulate $(V, \vec{P})$ using a $\sf{P^{||NEXP[\alpha(n)]}}$ Turing machine.

Consider the following deterministic oracle Turing machine $M$.
Given any input $x$ of length $n$, $M$ divides $[-1,1]$ into $4 \alpha(n)$ intervals, each of length $1/(2\alpha(n))$.
That is, the $i$th interval is $[i/2\alpha(n),(i+1)/2\alpha(n))$ for each $i \in \{-2\alpha(n), \ldots, 2\alpha(n)-1\}$.\footnote{To include $1$ as a possible reward, interval $2\alpha(n)-1$ should be closed on both sides; we ignore this for simplicity.}
For each interval $[i/2\alpha(n),(i+1)/2\alpha(n))$, referred to as \emph{interval} $i$, $M$ makes the following queries to an $\sf{NEXP}$ oracle:
\smallskip
\begin{enumerate}[noitemsep,nolistsep,leftmargin=*]
\item\label{step:exists} { Does there exist a strategy profile $\tilde{s}$ in $(V, \vec{P})$ with expected payment $u(\tilde{s}; x)$ in interval $i$?}

\item\label{step:ans} { Does there exist a strategy profile $\tilde{s}$ in $(V, \vec{P})$ with expected payment $u(\tilde{s}; x)$ in interval $i$
and corresponding answer bit $c=1$?}
\end{enumerate}
\smallskip

Note that $M$ makes $O(\alpha(n))$ non-adaptive queries, each of polynomial size: indeed,
$M$ only needs to specify $x$, the value $i$ and the query index.
Some of these queries may turn out to be unnecessary in the end, but they are made anyway so as to preserve non-adaptivity.
%

We now show that the queries made by $M$ can be answered by an $\sf{NEXP}$ oracle.
Recall that in an MRIP protocol,
a strategy $\tilde{s}_{jk}$ of each prover $P_j$ for each round $k$ is a function mapping the transcript $P_j$ has seen at the beginning of round $k$
to the message he sends in that round.
Since the protocol has polynomially many provers and polynomially many rounds,
a strategy profile $\tilde{s}$ consists of polynomially many functions from $\{0,1\}^*$ to $\{0,1\}^*$,
and for each function, both the input length and the output length are polynomial in $n$.
Thus it takes at most exponentially many bits to specify a strategy profile: if the input length is at most $p(n)$ and the output length is at most $q(n)$, then $2^{p(n)}q(n)$ bits are sufficient to specify the truth table of a function.

Thus, an $\sf{NEXP}$ machine can
non-deterministically choose a strategy profile $\tilde{s}$.
It then goes through all possible realizations of $V$'s random string and, for each realization,
simulates $(V, \vec{P})$ on input~$x$ using~$\tilde{s}$, to compute the reward $R$.
Finally, the $\sf{NEXP}$ machine computes the expected payment $u(\tilde{s}; x)$, checks if $u(\tilde{s}; x)$ is in
 interval $i$ (and if $c=1$ for query~\ref{step:ans}),
 and accepts or rejects accordingly.
It is easy to see that if the desired strategy profile $\tilde{s}$ exists then this machine accepts $\tilde{s}$; otherwise it always rejects.

Since the verifier's random string has polynomially many bits, there
are exponentially many realizations in total.
Since $V$ runs in
polynomial time and it takes exponential time to look up the truth tables for $\tilde{s}$,
each realization takes exponential time to run.
Thus this machine runs in non-deterministic exponential time, and $M$'s queries can be answered by an $\sf{NEXP}$ oracle.



Finally, given the oracle's answers to its queries,
$M$ finds the highest index $i^*$ such that interval $i^*$ is ``non-empty'': that is, the oracle has answered 1 for query~\ref{step:exists} for this interval.
$M$ accepts if the oracle's answer to query~\ref{step:ans} for this interval is $1$, and rejects otherwise.
It is clear that $M$ runs in polynomial time.

The only thing left to show is that $M$ decides $L$ given correct answers to its oracle queries.
By definition, for the best strategy profile $s^*$ of the provers in $(V, \vec{P})$ for $x$, $u(s^*;x)$ falls into interval $i^*$.
Because $(V, \vec{P})$ has $\alpha(n)$ utility gap and each interval is of length $1/(2\alpha(n))$, by Definition~\ref{def:rewardgap},
all strategy profiles whose expected payments are in interval $i^*$ must have the same answer bit $c$ as that in $s^*$.
By the definition of MRIP protocols, $x\in L$ if and only if $c=1$, which occurs
if and only if the oracle's answer to query~\ref{step:ans} for interval $i^*$ is 1.
Thus $M$ decides $L$ and Lemma \ref{lem:notc-upper} holds.
\end{proof}

\begin{proof}[Proofs of \thmref{constantchar} and \thmref{polychar}]
\lemref{notc-lower} and \lemref{notc-upper} together imply that,
for any positive integral function $\alpha(n)$ that is polynomially bounded
and polynomial-time computable,
\[
    \aMRIP =\sf{P^{||NEXP[\alpha(n)]}}.
\]
\thmref{constantchar} holds by taking $\alpha(n) = O(1)$; and \thmref{polychar}
holds because $\pMRIP = \bigcup_{\alpha(n)=n^k: k\geq 0} \aMRIP = \bigcup_{\alpha(n)=n^k: k\geq 0} \sf{P^{||NEXP[\alpha(n)]}} = \sf{P^{||NEXP}}$.
\end{proof}

%
%

\section{Full Power of Multi-Prover Rational Interactive Proofs}\seclabel{mripchar}

In this section we prove \thmref{expchar}, that is, $\sf{MRIP = EXP^{||NP}}$.
We first show that $\sf{MRIP}$ is the same as another complexity class,  $\sf{EXP^{||poly-NEXP}}$,
which we define below.
We complete the proof of \thmref{expchar} by showing $\sf{EXP^{||NP}} = \sf{EXP^{||poly-NEXP}}$.
\begin{definition}\deflabel{exppolynexp}
    $\sf{EXP^{||poly-NEXP}}$ is the class of languages decidable by an exponential-time Turing machine with non-adaptive access to an $\sf{NEXP}$ oracle, such that
the length of each oracle query is polynomial in the length of the input of the Turing machine.
\end{definition}

\subsection{Preliminaries for Our Lower Bound}
In the lemma below, we first provide a lower bound on the class $\sf{MRIP}$.  In Section~\ref{sec:mripupper} we give a matching upper bound,
leading to a tight characterization.

\begin{lemma}
    \lemlabel{mriplow}
$\sf{EXP^{||poly-NEXP}}\subseteq\sf{MRIP}$.
\end{lemma}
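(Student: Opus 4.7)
The plan is to construct an MRIP protocol that combines three already-established MRIP sub-protocols---for $\sf{EXP}$ (\lemref{expmrip}), $\sf{NEXP}$ (\Cref{lem:nexp-mip}), and $\sf{coNEXP}$ (via \Cref{lem:closed} applied to \Cref{lem:nexp-mip})---and that forces rational provers to implicitly commit to a truthful ``oracle function''. Fix $L \in \sf{EXP^{||poly-NEXP}}$, decided by a deterministic TM $M$ running in time $2^{n^{O(1)}}$ with non-adaptive queries of length $\ell = n^{O(1)}$ to an $\sf{NEXP}$ oracle $O$. Let $A^*: \{0,1\}^{\ell} \to \{0,1\}$ denote $O$ restricted to length-$\ell$ strings, so that $L(x) = M^{A^*}(x)$. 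The protocol first has $P_1$ send a bit $c$ claiming $L(x)$; then $V$ tosses an auxiliary coin and, with pre-determined probabilities $\alpha$ and $1-\alpha$, runs exactly one of the following. (i) \emph{Oracle-consistency check.} $V$ picks a uniformly random $q\in\{0,1\}^{\ell}$, asks both $P_1$ and $P_2$ for $A^*(q)$, pays $-1$ if they disagree, and otherwise runs the MRIP protocol for $\sf{NEXP}$ (resp.\ $\sf{coNEXP}$) on input $q$ depending on whether the common answer is $1$ or $0$. (ii) \emph{EXP simulation.} $V$ runs the MRIP protocol of \lemref{expmrip} to check the claim ``$M^{A^*}(x) = c$'', treating the provers' on-demand answers to queries as defining $A^*$. $V$ outputs $c$. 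The payments are rescaled so that the EXP-simulation sub-protocol pays in $[0, \epsilon]$ for a sufficiently small $\epsilon > 0$, while the consistency check retains its $\Theta(1)$ penalty.

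Correctness has two parts. First, if the provers are truthful ($c = L(x)$ and $A^*(q) = O(q)$ for every $q$), then every sub-protocol accepts and yields non-negative expected payment, confirming Condition~1 of \Cref{def:mrip}. Second, suppose some optimal strategy of the provers leads $V$ to output an incorrect $c$. Since the MRIP-EXP sub-protocol forces $M^{A^*}(x) = c$ whenever the provers' claimed $A^*$ is consistent across all queries it makes, the provers must deviate from $O$ on at least one query $q^* \in \{0,1\}^{\ell}$ that $M$ actually makes. The consistency check samples $q = q^*$ with probability $2^{-\ell} > 0$, and on that event the MRIP guarantee for $\sf{NEXP}$ or $\sf{coNEXP}$ yields an expected payment strictly below the truthful value by at least some constant $\delta > 0$. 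The overall expected loss $\alpha \cdot 2^{-\ell} \cdot \delta$ strictly exceeds the maximum possible gain $(1-\alpha)\epsilon$ in the EXP simulation by the choice of $\epsilon$, contradicting optimality. Hence every optimal strategy yields the correct $c$, confirming Condition~2 of \Cref{def:mrip}.

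The main technical obstacle is making the sub-protocols combine coherently. A cheating strategy might try to trade off losses in one sub-protocol for gains in the other, or answer the same $q$ inconsistently across sub-protocols. The first is defused by the payment-scaling argument above, which critically exploits the fact that MRIP imposes no utility-gap requirement---it suffices that the optimal strategy be truthful, even by an exponentially small margin in expected payment. The second is defused by asking \emph{both} $P_1$ and $P_2$ for $A^*(q)$ in the consistency check: since the provers cannot communicate mid-protocol, any optimal strategy must answer every $q$ consistently with a joint pre-agreed (and, optimally, deterministic) function, implicitly committing to a well-defined $A^*$ upon which the EXP-simulation sub-protocol then acts. Carefully balancing the scaling parameter $\epsilon$ against the inverse-exponential detection probability, and verifying that this balance is preserved in the presence of arbitrary (possibly randomized) joint prover strategies, is the heart of the proof.
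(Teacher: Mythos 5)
Your high-level plan (audit a committed oracle with an inverse-exponential detection probability, scale the other payments small enough that MRIP's lack of a utility-gap requirement saves you) is in the spirit of the paper, but as written there is a genuine gap in the soundness argument: nothing binds the oracle values the provers use in your branch (ii) to the function $A^*$ that branch (i) audits. The provers see different messages in the two branches (a query string $q$ and the $\OSAT$/MIP subprotocol in branch (i), versus gate-indexed queries of the \lemref{expmrip} simulation in branch (ii)), so their strategies may differ by branch. A cheating optimum is then: answer branch (i) with the \emph{true} oracle $O$ on every $q$ (collecting the full truthful payment there, never triggering the $-1$ penalty or a losing MIP), while in branch (ii) feeding the circuit simulation an internally consistent fake oracle $\tilde{A}\neq O$ with $M^{\tilde{A}}(x)=c$ for the wrong $c$, collecting the full $\epsilon$ there as well. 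Your key step ``the provers must deviate from $O$ on at least one query $q^*$ that $M$ actually makes,'' with that deviation detectable by branch (i), is exactly what fails: the deviation lives only in branch (ii) and is invisible to branch (i), so the claimed loss $\alpha\,2^{-\ell}\delta$ never materializes. The paper's protocol is built precisely to prevent this decoupling: it audits a single three-level circuit (query-generating circuit, $\sf{NEXP}$ gates, answer-consuming circuit) along one execution path, where $P_2$'s one-shot answer commits to a single gate-value oracle used both for the circuit-consistency checks and for the $\sf{NEXP}$-gate audit, and the $\OSAT$ subprotocol's outcome $c'$ is compared directly against the committed gate value with payment $-1$ on mismatch.

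Two further points would need work even after fixing the binding issue. First, you invoke \lemref{expmrip} as a black box on ``$M^{A^*}(x)=c$,'' but that lemma applies to DC uniform circuits over $AND/OR/NOT$ gates; with oracle gates the verifier cannot locally check that a gate value is ``computed correctly,'' so the protocol must be modified (this is the substance of the paper's three-level construction, including the fact that the number of oracle queries may be exponential, which is why the provers cannot simply send all oracle answers). Second, in branch (i) you must specify that the provers' common claimed bit $A^*(q)$ is used \emph{as} the answer bit of the $\sf{NEXP}$/$\sf{coNEXP}$ subprotocol (or is cross-checked against it with a penalty); if the subprotocol lets the provers re-announce freely, its payment is independent of the claimed $A^*(q)$ and branch (i) provides no incentive at all. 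With that reading the constant gap $\delta$ you need does hold for the Figure-based protocol, but it should be argued, not just cited as ``the MRIP guarantee.''
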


To prove \lemref{mriplow}, let us recall some definitions and results from the literature of circuit complexity. 
First of all, a {\em circuit family} $\{C_n\}_{n=1}^{\infty}$ is a sequence of Boolean circuits
such that $C_n: \{0,1\}^n \rightarrow \{0,1\}$.
The gates are of types AND, OR, and NOT, with fan-ins 2, 2, and 1 respectively.
The input to a circuit is connected to a special set of ``input gates'', one for each bit of the input, whose output value is always the value of the corresponding bit.
The {\em size} of a circuit $C$ is the number of gates in $C$, including the input gates.
We index the gates in a circuit of size $g$ using
$\{1,2,...,g\}$.
Without loss of generality we assume that gate $g$ is the output gate of the whole circuit.
Moreover, if $C$ has input length $n$, without loss of generality we assume that gates $1,2,..,n$ are the input gates.
Note that the number of wires in $C$ is at most $2g$, since each gate has fan-in at most 2.  Thus we index the circuit's wires using $\{1,2,...,2g\}$.

\begin{definition}[DC uniform circuits \cite{arora2009computational}]
    \deflabel{polycircuits}
A circuit family $\{C_n\}_{n=1}^\infty$ is a {\em Direct Connect uniform (DC uniform) family} if the following questions can be answered in time polynomial in $n$:
\begin{enumerate}[noitemsep,nolistsep,leftmargin=*]
    \item \emph{SIZE}$(n)$: what is the size of $C_n$?
    \item \emph{INPUT}$(n, h, i)$: is wire $h$ an input to gate $i$ in $C_n$?
    \item \emph{OUTPUT}$(n, h, i)$: is wire $h$ the output of gate $i$ in $C_n$?
    \item \emph{TYPE}$(n, i, t)$: is $t$ the type of gate $i$ in $C_n$?
\end{enumerate}
\end{definition}

That is, the circuits in a DC uniform family may have exponential size,
but they have a succinct representation such that a polynomial-time Turing machine can answer all the questions in \defref{polycircuits}.
The class $\sf{EXP}$ can be characterized by the class of DC uniform circuit families:

\begin{lemma}[\hspace{.1pt}\cite{arora2009computational}]
    \label{lem:DC}
For any language $L$, $L\in \sf{EXP}$ if and only if it can be computed by
a DC uniform circuit family of size $2^{n^{O(1)}}$.
\end{lemma}

Next, we prove the following lemma, which is used in the proof of \lemref{mriplow}.

\begin{lemma}\lemlabel{expmrip25}
Every language $L$ in $\sf{EXP}$
has an MRIP protocol with two provers and five rounds based on DC uniform circuit families. 
\end{lemma}

\begin{proof}
By~\lemref{DC}, there exists a DC uniform circuit family $\{C_n\}_{n=1}^\infty$ that computes $L$. Let $g = 2^{n^k}$ be the size of each $C_n$, where $k$ is a constant that may depend on $L$.
We call a gate $i'\in \{1,2,...,g\}$ of $C_n$ an {\em input gate of gate $i$} if there is a directed wire from $i'$ to $i$.
For any input string $x$ of length $n$ and any gate $i$ in $C_n$,
let $v_i(x)\in \{0, 1\}$ be the value of $i$'s output on input $x$. In particular, $v_i(x) = x_i$ for any $i\in \{1,2,...,n\}$.
The 2-prover 5-round MRIP protocol $(V, \vec{P})$ for $L$ is given in \figref{fig:expmrip}. 

\begin{figure}[tbhp]

\centering
\fbox{
\begin{minipage}{0.96\textwidth}
{\normalsize

\noindent
For any input string $x$ of length $n$,
\begin{enumerate}[leftmargin=15pt]

\item $P_1$ sends one bit $c \in \{0,1\}$ to $V$. $V$ outputs $c$ at the end of the protocol.

\item\label{step:pick_gate}
$V$ computes $g = $SIZE$(n)$, picks a
    gate $i\in \{1,2,...,g\}$ uniformly at random, and sends $i$ to $P_1$.
    That is, $V$ queries $P_1$ for:
\begin{enumerate}
\item\label{step-2a}
 the type of gate $i$,

\item\label{step-2b}
the input gates and input wires of $i$, and

\item\label{step-2c}
the values of gate $i$ and its input gates.
\end{enumerate}
\item\label{step-3}
$P_1$ sends to $V$: type $t_i\in \{\mbox{AND, OR, NOT, INPUT}\}$; gates $i_1, i_2\in \{1,2,...,g\}$;
wires $h_1, h_2\in \{1,2,...,2g\}$; and values $v_i(x), v_{i_1}(x), v_{i_2}(x)\in \{0, 1\}$.

\item \label{step_p2}
$V$ picks a gate $i'\in \{i, i_1, i_2\}$ uniformly at random and sends $i'$ to $P_2$.

\item\label{step-7}
$P_2$ sends $v_{i'}'(x)\in \{0, 1\}$ to $V$.

\item
The protocol ends and $V$ computes the payment $R$ by verifying the following statements:
\begin{enumerate}
\item\label{step-6a} $t_i$ is the correct type of $i$
and the set of input gates of $i$ is correct using DC uniformity;
\item\label{step-6c}
if $i\in \{1,2,...,n\}$ (that is, an input gate of the circuit), then $v_{i}(x)= x_{i}$;

\item\label{step-6b}
if $i=g$ (that is, the output gate of the circuit), then $v_i(x)= c$;

\item\label{step-6d}
if $t_i \in \{\mbox{AND, OR, NOT}\}$, $v_i(x)$ follows the correct logic based on $t_i$ and $i$'s inputs.

\item\label{step-6e} 
The answers of $P_1$ and $P_2$ on the value of gate $i'$ are consistent.
\end{enumerate}
If any of these verifications fails then $R=0$; otherwise $R = 1$.
\end{enumerate}

}
\end{minipage}
}
\caption{An MRIP protocol for $\sf{EXP}$.}
\figlabel{fig:expmrip}
\end{figure}

To see why it is an MRIP protocol, notice that if $P_1$ and $P_2$ send the correct $c$ and always
answer $V$'s queries correctly according to $C_n$, then the payment to them is always $R=1$, irrespective of $V$'s coin flips.
Thus the expected payment is~$1$.
Below we show that any other strategy profile makes the expected payment strictly less than 1.

First of all, when the gate $i$ chosen by the verifier in Step \ref{step:pick_gate} is not an input gate,
if any of $P_1$'s answers in Step \ref{step-3} to queries \ref{step-2a} and \ref{step-2b} (namely, about $i$'s type, input gates and input wires) is incorrect,
then by DC uniformity the verification in Step \ref{step-6a} will fail, giving the provers a payment $R=0$.
Indeed, to verify whether $i_1$ and $i_2$ are the input gates of $i$,
it suffices to verify whether $h_1$ and $h_2$ are both the input wires of~$i$ and the output wires of $i_1$ and $i_2$:
this is why $V$ queries $P_1$ about $i$'s input wires.
Accordingly, if such a gate $i$ exists then the expected payment to the provers will be at most
$1-1/g < 1$.

Similarly, if there exists a non-input gate $i$ such that $P_1$ answers queries \ref{step-2a}
and \ref{step-2b} correctly but the values $v_i(x), v_{i_1}(x), v_{i_2}(x)$ are
inconsistent with $i$'s type, then Step \ref{step-6d} will fail conditioned on gate~$i$ being chosen,
and the expected payment to the provers is at most $1-1/g<1$.
Moreover, if there exists an input gate $i$ such that $v_i(x)\neq x_i$, or if $v_g(x)\neq c$,
then conditioned on gate $i$ being chosen, the expected payment is again at most $1-1/g<1$.

Next, as in the proof of \lemref{nexpmrip}, $P_2$ is only queried once (in Step \ref{step-7}).
Thus $P_2$ de facto commits to an oracle $A: \{1,\ldots,g\} \rightarrow \{0, 1\}$, which maps each gate to its value under input $x$.
If there exists a gate $i$ such that the values $v_i(x), v_{i_1}(x), v_{i_2}(x)$ in Step \ref{step-3} are not consistent with $A$,
then, conditioned on $i$ being chosen in Step \ref{step:pick_gate},
Step \ref{step-6e} will fail with probability $1/3$.
Since $i$ is chosen with probability $1/g$, the expected payment will be at most
$1-\frac{1}{3g}<1$.

Thus, the only strategy profile $\tilde{s}$ that can have expected payment equal to $1$ is the following:
\begin{enumerate}[noitemsep, nolistsep, leftmargin=*]
\item
 $P_1$ and $P_2$ report values of gates using the same oracle $A: \{1,\ldots,g\} \rightarrow \{0, 1\}$,
\item
 $A(i) = x_i$ for any input gate $i$,
\item
$A(g) = c$ for the output gate, and
\item
for any other gate $i$, $A(i)$ is computed correctly based on $i$'s type and input gates in $C_n$.
\end{enumerate}
Thus, $A(g)$ is computed according to $C_n$ with input $x$, and $A(g)=1$ if and only if $x\in L$.
Since $c=A(g)$, we have that $c=1$ if and only if $x\in L$ and $(V, \vec{P})$ is an MRIP protocol for $L$.
\end{proof}



\subsection{Lower Bound for $\sf{MRIP}$}
Using the protocol in \figref{fig:expmrip} as a building block, we are now ready to prove~\lemref{mriplow}.


\pparagraph{Circuits for $\sf{EXP^{||poly-NEXP}}$}
We start by creating some circuit structures for the class $\sf{EXP^{||poly-NEXP}}$.
For any language $L\in \sf{EXP^{||poly-NEXP}}$,
let $M$ be an exponential-time oracle Turing machine that decides $L$ using an oracle $O$.
Without loss of generality, assume $O$ is for $\OSAT$.
Let $q(n)$ be the number of oracle queries made by $M$ on any input $x$ of length $n$,
and $p(n)$ be the length of each query.
By the definition of $\sf{EXP^{||poly-NEXP}}$, $q(n)$ can be exponential in $n$, while $p(n)$ is polynomial.
Without loss of generality, $p(n)\geq 5$. Let $\ell(n) = p(n)q(n)$.
When $n$ is clear from context, we refer to $\ell(n)$, $p(n)$ and $q(n)$ as $\ell$, $p$ and~$q$ respectively.

Since the oracle queries are non-adaptive, there exists an exponential-time-computable function $f:\{0, 1\}^*\rightarrow \{0, 1\}^*$
such that, for any $x\in \{0, 1\}^n$, $f(x)\in \{0, 1\}^{\ell}$ and $f(x)$ is the vector of oracle queries made by $M$ given $x$.
($f$ is exponential-time computable because we can run $M$ on $x$ until it outputs all the queries.)
As in Lemma~\ref{lem:DC}, there exists a DC uniform circuit family $\{C_n\}_{n=0}^\infty$ of size $2^{n^{O(1)}}$ that computes $f$, where for any $n$,
$C_n$ has $n$-bit input and $\ell$-bit output.
Without loss of generality, the gates of $C_n$ can be partitioned into $q$ sets, one for each oracle query, such that the output of a gate only affects the value of the corresponding query. This can be done by duplicating each gate at most an exponential number of times. The resulting circuit family is still DC uniform.
Also without loss of generality, the oracle queries are all different. This can be done by including the index $i\in \{1,\dots, q\}$ in the $i$th query.

Given the vector of oracle answers corresponding to the $q$ queries of $M$, $b\in \{0, 1\}^q$, the membership of $x$ can be decided in time exponential in $n$.
Let $f': \{0, 1\}^*\rightarrow \{0, 1\}$ be a function such that,
given any $(n+q)$-bit input $(x, b)$ where $|x|=n$ and $b$ is the vector of oracle answers $M$ gets
with input $x$, $f'(x, b)$ is the output of $M$. Again,
$f'$ is computable by a DC-uniform circuit family $\{C'_n\}_{n=1}^\infty$ of size $2^{n^{O(1)}}$,
where each $C'_n$ has $(n+q)$-bit input and 1-bit output.
The size of $C'_n$ is exponential in $n$ but may not be exponential in its own input length,
since $q$ may be exponential in~$n$.
In particular, the Turing machine that answers questions SIZE, INPUT, OUTPUT, TYPE for $C'_n$ runs in time polynomial in $n$ rather than $n+q$.

Given the two circuit families defined above, the membership of $x$ in $L$ can be computed by the following three-level ``circuit:'' besides the usual AND, OR, NOT gates,
it has $q$ ``$\sf{NEXP}$'' gates, each of which has a $p$-bit input and 1-bit output, simulating the $\OSAT$ oracle.

\smallskip
\begin{itemize}[noitemsep,nolistsep,leftmargin=*]
    \item Level 1: The circuit $C_n$ for computing $f$. We denote its output by $(\phi_1, \phi_2,..., \phi_{q})$, where each $\phi_i$ is of $p$ bits and is an instance of $\OSAT$. 
Let $g = 2^{n^k}$ be the size of $C_n$, where $k$ is a constant.
Similar to our naming convention before, the set of gates is $\{1, 2, ..., g\}$,
the set of input gates is $\{1, 2, ..., n\}$, and the set of output gates is $\{n+1, n+2, ..., n+\ell\}$.
The input and the output gates correspond to $x$ and $(\phi_1, \phi_2,..., \phi_{q})$ in the natural order.

\item
Level 2: We have $q$ $\sf{NEXP}$ gates, without loss of generality denoted by $g+1, g+2, ..., g+q$.
For each $i\in \{1,2,...,q\}$, gate $g+i$ takes input $\phi_i$ and outputs 1 if and only if $\phi_i\in \OSAT$.

\item
Level 3: The circuit $C'_n$ for computing $f'$.
Let $g' = 2^{n^{k'}}$ be the size of $C'_n$, where $k'$ is a constant.
The set of gates is $\{g+q+1, g+q+2, ..., g+q+g'\}$, the set of
input gates is $\{g+q+1,..., g+q+n, g+q+n+1,..., g+q+n+q\}$, and the output gate is gate $g+q+g'$.
The first $n$ input gates connect to $x$, and
the remaining ones connect to the $\sf{NEXP}$ gates of Level 2.
The output of $C'_n$ is the final output of the whole circuit.
\end{itemize}
\smallskip

Inside the three-level circuit, we can compute each output gate of Level 1 and Level 3 using the protocol in \figref{fig:expmrip},
and each $\sf{NEXP}$ gate in Level 2 using the protocol in \figref{simple-nexp}.
However, we need to show that there exists an MRIP protocol $(V, \vec{P})$
where the verifier can get a consistent answer
to {\em all} of them simultaneously. In particular, the provers should not lie in $C_n$
in order to change the input to the $\sf{NEXP}$ queries to gain a higher overall expected payment.

\pparagraph{Our protocol}
Our protocol is specified in \figref{expnexp}.  It uses four provers.
In this protocol the verifier needs to compute $q(n)$ and $p(n)$.
Without loss of generality, we assume $q(n) = 2^{n^d}$ for some constant $d$, so its binary representation can be computed in time polynomial in $n$.
Since $p(n)$ is a polynomial in $n$, it can be computed by a polynomial-time verifier.

\begin{figure}[tbhp]

\centering
\fbox{
\begin{minipage}{0.96\textwidth}
{\normalsize

\noindent
For any input string $x$ of length $n$,
\begin{enumerate}[leftmargin=15pt]

\item $P_1$ sends one bit $c \in \{0,1\}$ to $V$. $V$ outputs $c$ at the end of the protocol.

\item\label{char-step:pick_gate}
$V$ computes $g =$ SIZE$(C_n)$, $q(n)$, and $g' =$ SIZE$(C'_n)$.\\
$V$ picks a gate $i\in \{1,2,..., g+q+g'\}$ uniformly at random
and sends $i$ to $P_1$. \\
By doing so, $V$ queries $P_1$ for:
\begin{enumerate}
\item\label{char-step-2a}
 the type $t_i$ of gate $i$,

\item\label{char-step-2b}
the input gates and input wires of $i$, and 

\item\label{char-step-2c}
the values of gate $i$ and its input gates. 
\end{enumerate}
\item\label{char-step-3}
$P_1$ sends to $V$ the following:
\begin{enumerate}
\item type $t_i\in \{\mbox{AND, OR, NOT}, \mbox{INPUT}, \sf{NEXP}\}$;

\item input gates $i_1,i_2,\dots, i_{f(i)}$ and input wires $h_1, h_2,\dots, h_{f(i)}$,
where $f(i)$ is the number of input gates of type $t_i$; and

\item values of gate $i$ and its input gates: $v_i(x)$, $v_{i_1}(x), v_{i_2}(x), \ldots, v_{i_{f(i)}}(x)$.
\end{enumerate}

\item\label{char-step-4}
$V$ verifies the following using DC uniformity or the naming convention:
\begin{enumerate}
\item\label{char-step-4a}
$t_i$ is the correct type of $i$ (in particular, if $i\in \{g+1,...,g+q\}$ then $t_i= \sf{NEXP}$) and $f(i)$ is correct for $t_i$; and
\item\label{char-step-4b}
the set of input gates of $i$ is correct.
\end{enumerate}
If any of the verifications fails, the protocol ends and
$R=-1$.

\item \label{char-step_p2}
$V$ picks a gate $i'$ uniformly at random from $\{i\} \cup \{i_1\ldots, i_{f(i)}\}$, and sends $i'$ to $P_2$. 

\item\label{char-step-7}
$P_2$ sends $v_{i'}'(x)\in \{0, 1\}$ to $V$.

\item\label{char-step-7b} {\bf Consistency.}
$V$ verifies $v_{i'}(x)=v_{i'}'(x)$: that is, the answers of $P_1$ and $P_2$ on the value of gate~$i'$ are consistent. If not, the protocol ends and $R=-1$.

\item\label{char-step-8} {\bf Correctness (Non-$\sf{NEXP}$ gates).}
If $t_i\neq \sf{NEXP}$, then $V$ checks if $v_i(x)$ is computed correctly
from $v_{i_1}(x), v_{i_2}(x), \ldots, v_{i_{f(i)}}(x)$ as follows:

\begin{enumerate}
\item\label{char-step-8a}
if $t_i=\mbox{INPUT}$ then $v_i(x) = v_{i_1}(x)$,
and if $i$ is one of the first $n$ gates in $C_n$ or $C'_n$, then $v_i(x)$ equals the corresponding bit of $x$;

\item\label{char-step-8b}
if $t_i \in \{ \mbox{AND, OR, NOT}\}$, then $v_i(x)$ follows the logic between $i$ and its inputs.

\item\label{char-step-8c}
if $i=g+q+g'$ (i.e., the output gate of the whole circuit), then $v_i(x) = c$.
\end{enumerate}

The protocol ends with the following reward: if any of the verifications fails then
$R=-\frac{1}{p+1}$,
otherwise $R=\frac{1}{p+1}$,
where $p$ is the length of each $\sf{NEXP}$ query.

\item\label{char-step-9}{\bf Correctness ($\sf{NEXP}$ gates).}
If $t_i= \sf{NEXP}$, then $V$ first
checks if $\phi_i = (v_{i_1}(x), ..., v_{i_p}(x))$ forms a valid $\OSAT$ instance.%
\footnote{Without loss of generality, we assume that
the instances of $\OSAT$ have a canonical form.}
If not, the protocol ends with $R=-\frac{2}{p+1}$.\\
If $\phi_i$ is a valid $\OSAT$ instance, then $V$ sends $\phi_i$ to $P_3$ and
$P_4$ and runs the MRIP protocol for $\sf{NEXP}$ in \figref{simple-nexp}.  Let
$c^*$ and $R^*$ respectively be the output and the reward of the $\sf{NEXP}$
protocol.  If $c^*=v_i(x)$ then $R = \frac{2R^*}{p+1}$; otherwise
$R=-\frac{2}{p+1}$.
\end{enumerate}

} \end{minipage} } \caption{An MRIP protocol for $\sf{EXP^{||poly-NEXP}}$.}
\figlabel{expnexp} \end{figure}

To prove the correctness of the protocol in \figref{expnexp},
first note that for any input string $x$, no matter which gate $i$ is chosen by $V$ in Step \ref{char-step:pick_gate},
if the provers always give correct answers according to the computation of $C_n$, the $\sf{NEXP}$ gates and $C'_n$,  the payment to them is
$R\geq\frac{1}{p+1}> 0.$
The first inequality is tight when either (a) $i$ is not an $\sf{NEXP}$ gate,
or (b) $i$ is an $\sf{NEXP}$ gate and the corresponding query $\phi_i$ is not in $\OSAT$ (since $R^*=1/2$ in this case).
If $i$ is an $\sf{NEXP}$ gate
and $\phi_i\in \OSAT$, then $R=\frac{2}{p+1}$ as $R^* = 1$.
Let $s$ be the strategy profile where the provers always send correct answers as
described above. Thus we have $u(s)\geq \frac{1}{p+1}$.



\pparagraph{The correctness of our protocol}
Arbitrarily fix a best strategy profile $s^*$ of the provers,
we show that under $s^*$, $c =1$ if and only if $x \in L$.

Since $P_2$ is queried only once (Step \ref{char-step-7}),
as in the proof of \lemref{nexpmrip},
any strategy of $P_2$ commits to an oracle $A: \{1,2,...,g+q+g'\}\rightarrow \{0,1\}$,
mapping each gate in the three-level circuit to its value under input~$x$.
First, we show that for non-$\sf{NEXP}$ gates,
$P_1$ answers all queries
consistently with~$A$. 

\begin{claim}\label{claim:consistent-non-nexp}
Under $s^*$, for any gate $i$ that is not an $\sf{NEXP}$ gate
and is
chosen by the verifier in Step~\ref{char-step:pick_gate},
$P_1$ reports the correct type and input gates of~$i$ in Step~\ref{char-step-3}, and reports the values of gate~$i$ and its input gates consistently with $A$.
\end{claim}

\begin{proof}
Suppose there exists a non-$\sf{NEXP}$ gate $i$ such that $P_1$ does not
report its type and input gates correctly.
Conditioned on $i$ being chosen by the verifier,
some verification in Step~\ref{char-step-4} is guaranteed to fail, and the payment is $-1$.
Consider the following alternative strategy $s_1'$ of $P_1$: if $i$ is not chosen by~$V$, then $P_1$'s strategy remains the same; if $i$ is chosen, then $P_1$ acts ``correctly'' as specified in Claim~\ref{claim:consistent-non-nexp}.
Under this strategy, when $i$ is chosen the payment is at least $-\frac{1}{p+1}>-1$, and when $i$ is not chosen the payment stays the same. Thus the expected payment gets larger, contradicting the fact that $s^*$ is the provers' best strategy profile.

Similarly, consider the case where $P_1$ reports $i$'s type and input gates correctly, but the reported values
do not match $A$ on some gate $i' \in \{i\} \cup \{i_1, \ldots, i_{f(i)}\}$.
Conditioned on gate $i$
being chosen, with probability at least $\frac{1}{f(i)+1} \ge \frac{1}{3}$,
$V$ picks $i'$  in Step~\ref{char-step_p2}
and the consistency check in Step~\ref{char-step-7b} fails,
leading to a payment of $-1$.
If $i'$ is not chosen in Step~\ref{char-step_p2},
the payment to the provers is at most $\frac{1}{p+1}$ (in Step \ref{char-step-8}).
Thus the expected payment conditioned on $i$ being chosen is at most
\[ - \frac{1}{3} + \frac{2}{3} \cdot \frac {1}{p+1} < -\frac {1}{p+1},\]
where the inequality holds since $p\geq 5$.
Again, consider the alternative strategy $s_1'$ of $P_1$. Under this strategy, conditioned on $i$ being chosen the expected payment is at least $-\frac{1}{p+1}$; and conditioned on $i$ not being chosen it stays the same. Thus the expected payment gets larger, again a contradiction.
%
\end{proof}

Below we only need to consider
cases where $P_1$ acts according to Claim \ref{claim:consistent-non-nexp}.
We argue about the correctness of $A$ on non-$\sf{NEXP}$ gates, and we have the following.
\smallskip

\begin{claim}\label{claim:non-nexp-a}
Under $s^*$, for every gate $i$ that is
not an $\sf{NEXP}$ gate, $A(i)$ and the values $A(i_1), \dots, A(i_{f(i)})$
are such that the verifications in Step \ref{char-step-8} succeed.
\end{claim}

\begin{proof}
By contradiction, assume this is not the case and
compare $s^*$ with the ``always correct'' strategy profile $s$ previously defined.
Recall that,
conditioned on $i$ being chosen,
for any non-$\sf{NEXP}$ gate~$i$
the payment under $s$ is exactly $\frac{1}{p+1}$,
and for any $\sf{NEXP}$ gate $i$
the payment under $s$ is at least $\frac{1}{p+1}$.

Under $s^*$, by Claim~\ref{claim:consistent-non-nexp},
$P_1$'s answers for $v_i(x), v_{i_1}(x),\dots, v_{i_{f(i)}}(x)$ are consistent with $A$.
If $A$ makes some verification in Step \ref{char-step-8} fail, then conditioned on $i$ being chosen,
the payment under $s^*$ is $-\frac{1}{p+1}$.
That is, the payment under $s^*$ drops by $\frac{2}{p+1}$ compared with that under $s$.

However, $s$ and $s^*$ may not have the same oracle queries to $\OSAT$. For each $\sf{NEXP}$ gate $j$
where the two queries differ, conditioned on $j$ being chosen,
the best case for $s^*$ (and the worst case for the analysis) is that its query $\phi^*_j$ is in $\OSAT$, resulting in payment $\frac{2}{p+1}$, while the query $\phi_j$ of $s$ is not in $\OSAT$, resulting in payment $\frac{1}{p+1}$.
That is, the payment under $s^*$ increases by $\frac{1}{p+1}$ compared with that under $s$.

Fortunately, for each $\sf{NEXP}$ gate $j$, in order for the two queries to differ,
there exists at least one non-$\sf{NEXP}$ gate $i$ in the part of the circuit $C_n$ for computing the input to $j$, where
the computation of $A$ (and thus $s^*$) is incorrect, and $A(i)$ and $A(i_1), \dots, A(i_{f(i)})$  make some verification in Step \ref{char-step-8} fail.
Otherwise the queries made by $A$ are computed correctly from the input $x$ and are the same as those under $s$.
Since gate $j$ and the corresponding gate $i$ are chosen with the same probability $\frac{1}{g+q+g'}$,
we have
\[
    u(s)-u(s^*) \geq \frac{1}{g+q+g'}\cdot \frac{2}{p+1} - \frac{1}{g+q+g'}\cdot \frac{1}{p+1} >0.
\]
If there is more than one such $j$, 
their corresponding gates $i$ are all different from each other, because the circuits for computing different oracle queries are disjoint from each other---so
the gap between $u(s)$ and $u(s^*)$ becomes even larger.
This contradicts that $s^*$ is the provers' best strategy, and thus Claim~\ref{claim:non-nexp-a} holds.
\end{proof}

%

\smallskip

Now we only need to consider
cases where $P_1$ acts according to Claims \ref{claim:consistent-non-nexp} and~\ref{claim:non-nexp-a}.
We prove the correctness of $A$ on $\sf{NEXP}$ gates. 

\smallskip

\begin{claim}\label{claim:nexp-a}
Under $s^*$, for every $\sf{NEXP}$ gate $i$,
$P_1$ reports the correct type and input gates of $i$ in Step~\ref{char-step-3},
and reports the values of gate~$i$ and its input gates consistently with $A$.
Moreover, $\phi_i =
(A(i_1), \dots, A(i_p))$ forms a valid $\OSAT$ instance and
$A(i)=1$ iff $\phi_i\in \OSAT$.
\end{claim}

\begin{proof}
The fact that $\phi_i$ forms a valid $\OSAT$ instance follows immediately from
Claims~\ref{claim:consistent-non-nexp} and~\ref{claim:non-nexp-a}, because
each bit of $\phi_i$ is the output of a logic gate and thus computed correctly from the input $x$ according to $C_n$.
We again compare $s^*$ with the always-correct strategy profile $s$.

Note that $A$ and $s$ are both correct on $C_n$, thus form the same $\OSAT$ queries.
They both evaluate $C_n'$ correctly as well, but
it is possible that $A$ has incorrect outputs of the $\sf{NEXP}$ gates and thus incorrect inputs to $C_n'$.
Nevertheless, for each non-$\sf{NEXP}$ gate $i'$, conditioned on $i'$ being chosen,
$s^*$ makes the verifications in Step~\ref{char-step-8} succeed, and
the payment is $\frac{1}{p+1}$ under both $s$ and $s^*$.

If $P_1$ reports $i$'s type and input gates incorrectly under $s^*$, then the payment is $-1$ (Step \ref{char-step-4}) conditioned on
$i$ being chosen. However, by reporting the required information correctly and reporting $v_i(x), v_{i_1}(x),\dots, v_{i_p}(x)$ consistently with $A$, the corresponding payment is at least $-\frac{2}{p+1}>-1$ and the expected payment increases, contradicting with the fact that $s^*$ is the provers' best strategy profile.

Suppose $P_1$ reports $i$'s type and input gates correctly,
but reports $v_{i'}(x)$ inconsistently with $A$ for some $i'\in \{i\}\cup\{i_1,\dots, i_p\}$.
In this case, with probability at least $\frac{1}{p+1}$ the payment is $-1$ (Step \ref{char-step-7b}),
and with probability at most $1-\frac{1}{p+1}$ the payment is at most $\frac{2}{p+1}$ (Step \ref{char-step-9}).
Thus the expected payment is
\[
    R \leq -\frac{1}{p+1} + (1-\frac{1}{p+1})\cdot \frac{2}{p+1} = \frac{1}{p+1} - \frac{2}{(p+1)^2}< \frac{1}{p+1}.
\]
The corresponding expected payment under $s$ is at least $\frac{1}{p+1}$.
As the two strategy profiles have the same payment $\frac{1}{p+1}$ conditioned on every non-$\sf{NEXP}$ gate
$i'$ being chosen,
we have $u(s)>u(s^*)$, a contradiction.

Finally, assume $P_1$ is consistent with $A$, but $A(i)$ is not the correct answer of $\phi_i$.
If the answer bit $c^*$ given by $P_3$ and $P_4$
is different from $A(i)$ (i.e., $v_i(x)$), then the payment is $-\frac{2}{p+1}<\frac{1}{p+1}$, less than the payment received under the always-correct strategy profile $s$.
If $c^*=v_i(x)$, then $c^*$ is the wrong answer bit in the MRIP protocol for $\sf{NEXP}$, and
the resulting payment $R^*$ is strictly less than the payment under $s$.
Thus, again we have that $u(s)>u(s^*)$, which is a contradiction, and Claim \ref{claim:nexp-a} holds.
\end{proof}

\smallskip

Claims~\ref{claim:consistent-non-nexp}, \ref{claim:non-nexp-a}, and~\ref{claim:nexp-a}
together imply that
the always-correct strategy profile $s$ is the only possibility for the provers' best strategy profiles; that is, $s^*=s$.
Under $s$, for any gate $i$, $A(i)$ is the correct value of $i$ under input $x$,
and $c = A(g+q+g')$. Thus $c=1$ if and only if $x\in L$, and~\lemref{mriplow} holds.

\pparagraph{Remark}
When proving~\thmref{mrip-23} in Section \ref{subsec:neg},
we show that any MRIP protocol can be simulated using only 2 provers.
In this section we still describe the protocol in \figref{expnexp} using 4 provers, to ease the analysis
and to avoid entangling the proofs of \thmref{expchar} and~\thmref{mrip-23}.

\subsection{Upper Bound for $\sf{MRIP}$}
\label{sec:mripupper}
We now give a tight upper-bound on $\sf{MRIP}$, leading to an exact characterization.

\begin{lemma}
\lemlabel{mripup} $\sf{MRIP}\subseteq\sf{EXP^{||poly-NEXP}}$.
\end{lemma}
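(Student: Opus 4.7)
The plan is to reduce deciding $L$ to computing, for each possible value of the answer bit $c \in \{0,1\}$, the maximum expected payment achievable by strategy profiles that send first bit $c$, and to obtain both maxima via a single non-adaptive batch of $\sf{NEXP}$ queries.

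Fix an MRIP protocol $(V, \vec{P})$ for $L$ with polynomial-length verifier randomness $\rho(n)$ and payment function $R(x, r, \vec{m}) \in [-1,1]$; without loss of generality $R$ takes rational values with common denominator $2^{\sigma(n)}$ for some polynomial $\sigma$. For $c \in \{0,1\}$, define
\[
u^c(x) \triangleq \max\{u_{(V,\vec{P})}(\tilde{s};x) : \tilde{s} \text{ has first bit sent by } P_1 \text{ equal to } c\}.
\]
By Definition \ref{def:mrip}, at every maximizing strategy profile the reported $c$ is correct, so $x \in L$ iff $u^1(x) > u^0(x)$. It therefore suffices to compute $u^0(x)$ and $u^1(x)$.

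Next I would introduce the oracle language
\[
A \triangleq \{(x, c, v) : c \in \{0,1\},\ v = k/2^{\rho(n)+\sigma(n)} \in [-1,1],\ \exists\, \tilde{s} \text{ with first bit } c \text{ and } u(\tilde{s}) \geq v\}
\]
and argue $A \in \sf{NEXP}$. A nondeterministic machine guesses $\tilde{s}$---each of its round-$j$ strategies $s_{ij}$ is a function from polynomial-length transcripts to polynomial-length messages and admits a lookup-table description of size $2^{\mathrm{poly}(n)}$---verifies that the first bit of $P_1$'s first message is $c$, then deterministically enumerates all $2^{\rho(n)}$ coin sequences $r$, simulates $(V,\vec{P})(x,r,\tilde{s})$, accumulates the resulting $R$-values to obtain the rational $u(\tilde{s})$ exactly, and compares with $v$. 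This runs in $2^{\mathrm{poly}(n)}$ time, and each query $(x,c,v)$ has polynomial length in $n$.

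The $\sf{EXP}^{||\sf{NEXP}}$ machine, on input $x$, issues in parallel the queries $(x, c, v)$ for both $c \in \{0,1\}$ and every rational $v = k/2^{\rho(n)+\sigma(n)}$ with integer $k$ in $[-2^{\rho(n)+\sigma(n)},\, 2^{\rho(n)+\sigma(n)}]$; this is $2^{\mathrm{poly}(n)}$ non-adaptive queries of polynomial size. From the answers it reads off $u^c(x) = \max\{v : (x,c,v) \in A\}$ for each $c$ and accepts iff $u^1(x) > u^0(x)$. The main technical point is verifying that $u(\tilde{s})$ genuinely lies on this polynomially-described rational grid, so that the batch identifies the two maxima exactly; this follows because $R$ has polynomial bit-complexity and $r$ has polynomial length, and it is precisely what lets us bypass (adaptive) binary search and keep the oracle access non-adaptive.
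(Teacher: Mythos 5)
Your proposal is correct and follows essentially the same route as the paper: exponentially many non-adaptive, polynomial-length queries to an $\sf{NEXP}$ oracle that guesses a strategy profile as an exponential-size truth table and computes its exact expected payment by enumerating the verifier's coins. The only cosmetic difference is that you compare the two maxima $u^0(x)$ and $u^1(x)$ via threshold queries on an exact dyadic grid, whereas the paper locates the topmost non-empty payment interval (using a minimum-gap argument between distinct expected payments) and queries the answer bit $c$ within that interval; both are sound.
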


\begin{proof}
The proof is similar to that of~\lemref{notc-upper}.
Let $L$ be a language with an MRIP protocol $(V,\vec{P})$.
Since $V$ runs in polynomial time, there exists a constant $k$ such that, for any two payments $R$ and $R'$
generated by $V$ on the same input of length $n$ and different random coins:
\[
R\neq R' \Rightarrow |R-R'|\geq \frac{1}{2^{n^k}}.
\]
For example, $n^k$ can be an upper bound on $V$'s running time.
Moreover, since $V$ uses polynomially many random coins, there exists a constant $k'$ such that
any payment that appears with positive probability under an input of length $n$ must appear with probability at least $\frac{1}{2^{n^{k'}}}$.
Thus, for an input $x$ of length $n$, and any two strategy profiles $s$ and $s'$,
where the expected payments $u(s; x)$ and $u(s'; x)$ are different,
\begin{equation}\label{equ:u}
|u(s; x) - u(s'; x)|\geq \frac{1}{2^{n^{k+k'}}}.
\end{equation}

Consider the following deterministic oracle Turing machine $M$:
given any input $x$ of length $n$, $M$ divides the interval $[-1,1]$
into $4\cdot 2^{n^{k+k'}}$ sub-intervals of length $\frac{1}{2\cdot 2^{n^{k+k'}}}$.
For any $i\in \{-2\cdot 2^{n^{k+k'}}+1, \ldots,2\cdot 2^{n^{k+k'}}\}$,
the $i$th interval is $\left[\frac{(i-1)}{2\cdot 2^{n^{k+k'}}}, \frac{i}{2\cdot 2^{n^{k+k'}}}\right]$.
For each interval $i$, $M$ makes the following two queries to an $\sf{NEXP}$ oracle:

\smallskip
\begin{enumerate}[noitemsep, nolistsep, leftmargin=*]
\item Does there exist a strategy profile $s$ in $(V, \vec{P})$ with expected payment $u(s; x)$ in interval $i$?
\item Does there exist a strategy profile $s$ in $(V, \vec{P})$ with expected payment $u(s; x)$ in interval $i$ and the corresponding answer bit $c=1$?
\end{enumerate}
\smallskip
$M$ makes exponentially many non-adaptive queries, and each
query has length polynomial in~$n$. Furthermore, each query can be answered by an $\sf{NEXP}$ oracle; see the proof of~\lemref{notc-upper}.

Given the oracle's answers, $M$ finds the highest index $i^*$ such that interval $i^*$ is non-empty: that is, the oracle's answer to the first query for interval $i^*$ is 1.
$M$ accepts if the answer to the second query for interval $i^*$ is $1$, and rejects otherwise. $M$ clearly runs in exponential time.

We show that $M$ decides $L$ given correct answers to its queries.
Similar to~\lemref{notc-upper}, by Definition~\ref{def:mrip}, the best strategy profile $s^*$ has the highest expected payment $u(s^*;x)$, which falls into interval~$i^*$.
By Inequality \ref{equ:u}, any strategy profile $s'$ with $u(s'; x)< u(s^*; x)$ has $u(s'; x)$ not in interval $i^*$, since the difference between
$u(s'; x)$ and $u(s^*; x)$ is larger than the length of the interval.
Thus, any strategy profile $s'$ with $u(s'; x)$ in interval $i^*$ satisfies
$u(s'; x) = u(s^*; x)$, i.e, they are all the best strategy profiles of the provers.
In particular, the answer bit $c$ is the same under all these strategy profiles,
and $c=1$ if and only if $x\in L$.
So the second query for interval $i^*$ is 1 if and only if $x\in L$, and $M$ decides~$L$.
\end{proof}

\subsection{Final Characterization}
So far we have established that $\sf{MRIP = EXP^{||poly-NEXP}}$.  To finish
the proof of \thmref{expchar}, we show $\sf{EXP^{||poly-NEXP}}$ equals $\sf{EXP^{||NP}}$.

\begin{lemma}\lemlabel{equaloracles}
$\sf{EXP^{||poly-NEXP}=EXP^{||NP}}$.
\end{lemma}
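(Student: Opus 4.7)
My plan is to prove the two inclusions separately, both by straightforward padding/encoding arguments.

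Direction $\sf{EXP^{||NP}} \subseteq \sf{EXP^{||poly\mbox{-}NEXP}}$. Take an $\sf{EXP^{||NP}}$ machine $M$ with $\sf{NP}$ oracle $L$ whose verifier is $V$. The only obstacle is that $M$'s queries to $L$ can be as long as $2^{n^{O(1)}}$, whereas $\sf{EXP^{||poly\mbox{-}NEXP}}$ restricts queries to polynomial length. I would sidestep this by defining a new $\sf{NEXP}$ oracle $L'$ that accepts a short ``pointer'' $(x,i)$ of length polynomial in $n=|x|$. On such a query, $L'$ internally computes $M$'s $i$-th query string $\phi_i(x)$ (which takes exponential time in $n$), nondeterministically guesses an $\sf{NP}$ witness of length polynomial in $|\phi_i(x)|=2^{n^{O(1)}}$, and verifies it with $V$. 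This puts $L' \in \sf{NEXP}$ in its own input length. The new $\sf{EXP^{||poly\mbox{-}NEXP}}$ machine just queries $L'$ on $(x,i)$ for every $i$ and simulates $M$'s post-oracle computation.

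Direction $\sf{EXP^{||poly\mbox{-}NEXP}} \subseteq \sf{EXP^{||NP}}$. This is the classic padding argument. Let $O \in \sf{NEXP}$ be decided by a nondeterministic machine $N$ running in time $2^{m^c}$ on inputs of length $m$. I would define
$$O' \;=\; \bigl\{\, \phi\,\#\,1^{2^{|\phi|^c}} : \phi \in O \,\bigr\}.$$
Then $O' \in \sf{NP}$ because, on a padded input of length roughly $2^{|\phi|^c}$, one can guess an accepting computation of $N$ on $\phi$ and verify it in time linear in the padded length. The new $\sf{EXP^{||NP}}$ machine $M'$ simulates the given $\sf{EXP^{||poly\mbox{-}NEXP}}$ machine $M$, replacing each polynomial-length query $\phi$ to $O$ by its padded counterpart as a query to $O'$. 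The padded queries have length $2^{p(n)^c} = 2^{n^{O(1)}}$, which $M'$ can write down within its exponential time budget.

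The main thing to be careful about is preserving non-adaptiveness. In both directions the transformation acts on each query independently, consulting neither other queries nor any oracle answers (padding in one direction, replacement by a short pointer in the other), so non-adaptiveness carries over with no additional work. A minor subtlety in the first direction is that $L'$ is defined in terms of the specific machine $M$, but this is permissible since the $\sf{NEXP}$ oracle in the definition of $\sf{EXP^{||poly\mbox{-}NEXP}}$ can be any language in $\sf{NEXP}$. I do not anticipate any real obstacle here; the argument is essentially a bookkeeping exercise once the padded languages are set up correctly.
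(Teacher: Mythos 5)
Your proposal is correct and matches the paper's own argument in both directions: padding the polynomially long queries to make the $\sf{NEXP}$ oracle an $\sf{NP}$ oracle, and replacing the exponentially long queries by short pointers $(x,i)$ so that a $\sf{NEXP}$ oracle can recompute the $i$-th query itself. The non-adaptivity bookkeeping is handled the same way in the paper, so there is nothing to add.
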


\begin{proof} 
First, we show $\sf{EXP^{||poly-NEXP}\subseteq EXP^{||NP}}$ using a padding argument.
Let $M_1$ be an exponential-time oracle Turing machine with non-adaptive access to an oracle $O_1$ for an $\sf{NEXP}$ language, where the lengths of the oracle queries are polynomial in the input length.
Let $O_1$ be decided by a non-deterministic Turing machine $M_{1}'$ with time complexity $2^{|q|^{k_1}}$, where $k_1$ is a constant and $q$ is the query to the oracle (the input to $M_1'$).
We simulate $M_1^{O_1}$ using another exponential-time oracle Turning machine $M_2$ and another oracle $O_2$, as follows.

Given any input $x$ of length $n$, $M_2$ runs $M_1$ to generate all the oracle queries. For each query $q$,
$M_2$ generates a query $q'$ which is $q$ followed by $2^{|q|^{k_1}}$ bits of $1$.
It then gives all the new queries to its own oracle $O_2$.
Given the oracle's answers, $M_2$ continues running $M_1$ to the end, and accepts if and only if $M_1$ does.
Since $|q|$ is polynomial in $n$, $2^{|q|^{k_1}}$ is exponential in $n$. Furthermore, since there are exponentially many queries and $M_1$ runs in exponential time,
we have that $M_2$ runs in exponential time as well.
It is clear that (1) $M_2$ makes non-adaptive oracle queries, and
(2) $M_2^{O_2}$ decides the same language as $M_1^{O_1}$, as long as $O_2$'s answer to each query $q'$ 
is the same as $O_1$'s answer to the corresponding query $q$.

We define $O_2$ by constructing a non-deterministic Turing machine $M_{2}'$ that simulates $M_{1}'$.
That is, $O_2$ will be the language decided by $M_{2}'$.
More specifically, given a query $q'$ ($q$ followed by $2^{|q|^{k_1}}$ $1$s),
$M_{2}'$ runs $M_{1}'$ on $q$, makes the same non-deterministic choices as $M_{1}'$, and outputs whatever $M_{1}'$ outputs.
Since $M_{1}'$ runs in time $2^{|q|^{k_1}}$,
$M_{2'}$ runs in time polynomial
in its own input size.
Thus, the language $O_2$ decided by $M_{2}'$ is in $\sf{NP}$, and $q' \in O_2$  if and only if $q \in O_1$. 
Accordingly, $M_2^{O_2}$ decides the same language as $M_1^{O_1}$, and we have
$\sf{EXP^{||poly-NEXP}\subseteq EXP^{||NP}}$.

\smallskip
Now, we show $\sf{EXP^{||NP}\subseteq EXP^{||poly-NEXP}}$. The proof is similar to the above.
Let $M_2$ be
an exponential-time oracle Turing machine with non-adaptive access to an oracle $O_2$ for an $\sf{NP}$ language.
Note that the queries made by $M_2$ can be exponentially long.
Let $O_2$ be decided by a non-deterministic Turing machine $M_{2}'$ that runs in time $|q|^{k_2}$, where $k_2$ is a constant and $q$ is the query to $O_2$ (the input to $M_{2}'$).
We simulate $M_2^{O_2}$ using an exponential-time oracle Turning machine $M_1$ and an oracle $O_1$, as follows.

Given any input $x$ of length $n$,
$M_1$ runs $M_2$ to compute the number of oracle queries made by $M_2$, denoted by $Q$.
$M_1$ generates $Q$ oracle queries, with the $i$th query being $x$ followed by the binary representation of $i$.
Since $M_2$ makes at most exponentially many queries, the length of each query made by $M_1$ is (at most) polynomial in $n$.

Query $i$ of $M_1$ is to the following question: {\em is the $i$th query made by $M_2$ given input $x$ in the $\sf{NP}$ language
$O_2$?}
$M_1$ then gives all its queries to its own oracle $O_1$.
Given $O_1$'s answers, $M_1$ uses them to continue running $M_2$, and accepts if and only if $M_2$ does.
Since $M_2$ runs in exponential time, $M_1$ runs in exponential time as well.
It is clear that (1) $M_1$ makes non-adaptive oracle queries, and (2) $M_1^{O_1}$ decides the same language as $M_2^{O_2}$
as long as $O_1$ answers each query correctly.

We define $O_1$ by constructing a non-deterministic Turing machine $M_{1}'$ that simulates $M_{2}'$.
That is, $O_1$ will be the language decided by $M_{1}'$.
More specifically, given an input string of the form $(x, y)$, $M_{1}'$ interprets the second part as the binary representation of an integer $i$.
It runs $M_2$ on $x$ to compute its $i$th query, denoted by $q$. It then runs $M_{2}'$ on $q$,
 makes the same non-deterministic choices as $M_{2}'$, and outputs whatever $M_{2}'$ outputs. Since $q$ is at most exponentially long in $|x|$ and $M_{2}'$ runs in time $|q|^{k_2}$, the running time of $M_{1}'$ is (at most) exponential in its input length.
Thus, the language $O_1$ decided by $M_{O_1}$ is in $\sf{NEXP}$.
Moreover, if $q \in O_2$, then there exist non-deterministic choices that cause $M_{2}'$ and thus $M_{1}'$ to accept;
otherwise both reject.
That is, $O_1$'s answers to the queries by $M_1$ on input $x$ are the same as $O_2$'s answers to the queries by $M_2$ on the same input.

Thus, $M_1^{O_1}$ decides the same language as $M_2^{O_2}$, and we have $\sf{EXP^{||NP}}\subseteq\sf{EXP^{||poly-NEXP}}$.
\end{proof}

\begin{proof}[Proof of \thmref{expchar}]
The theorem
follows immediately from
Lemmas \ref{lem:mriplow}, \ref{lem:mripup},
and \ref{lem:equaloracles}.
\end{proof}

    \section{MRIP Protocols with Two Provers and Constant Rounds}\seclabel{constantrounds}
So far, we allow MRIP protocols to have polynomially many provers and polynomially many rounds, as in MIP protocols in general.
It is well known that any MIP protocol can be simulated using just two provers and
one round~\cite{feige1992two}, which is clearly optimal in terms of both prover number and round number.
In this section, we show similar
results for MRIP protocols.
Recall from Section \ref{subsubsec:23} that we use $\sf{MRIP}[p(n), k(n), t(n)]$ to denote the set of languages that have
MRIP protocols with $p(n)$ provers, $k(n)$ rounds, and $1/t(n)$ utility gap.




\subsection{Constant and Noticeable Utility Gap}\label{subsec:const}
We first prove~\thmref{mrip-const-23} and~\thmref{mrip-poly-23}: that is, any MRIP protocol
with a constant or polynomial utility gap can be simulated by a 2-prover, 3-round MRIP protocol that
retains the corresponding class of utility gaps.
We do so directly using our characterizations in Section \ref{sec:const-polygap}.


%

\begin{proofof}{\thmref{mrip-const-23} \mbox{and} \thmref{mrip-poly-23}}
Recall from~\lemref{notc-lower} and~\lemref{notc-upper} that
\[
    \aMRIP = \sf{P^{||NEXP[\alpha(n)]}},
\]
for any positive integral function $\alpha(n)$ that is polynomially bounded and polynomial-time computable.
We show that 2 provers and 3 rounds are enough to simulate the protocol in~\figref{constantgap}.
Setting $\alpha(n)$ to be a constant or a polynomial in $n$
leads to the corresponding theorems.

More precisely, for any language $L\in \aMRIP$,
we have $L \in \sf{P^{||NEXP[\alpha(n)]}}$.
By definition, there exists a polynomial-time oracle Turing machine $M$ that decides $L$ using $O(\alpha(n))$ non-adaptive queries to an $\sf{NEXP}$ oracle.
Again we assume without loss of generality that the oracle is $\OSAT$ and $M$ makes exactly $\alpha(n)$ oracle queries.
Consider the following 2-prover 3-round variant of the MRIP protocol in \figref{constantgap}  for~$L$.
For any input $x$ of length $n$:
\begin{itemize}[noitemsep,nolistsep,leftmargin=*]
\item
$V$ computes the queries made by $M$, denoted by $\phi_1,\ldots,\phi_{\alpha(n)}$.
%
%
%
%
%
%
%
%
%
%

\item
In the first round,
$P_1$ sends to $V$ the answer bit $c$ to the membership of $x$ in $L$, as well as the answer bits to all
queries, $c_1^*, c_2^*, \ldots, c_{\alpha(n)}^*$,
where $c_i^*$ is the answer to $\phi_i$.
As $P_1$ can compute all oracle queries by running $M$ on $x$,
there is no need for $V$ to send $\phi_1,\dots, \phi_{\alpha(n)}$ to him.

\item
After $V$ has received the answer bits for all $\phi_i$'s, he distinguishes two cases.

For each $i\in \{1, \dots, \alpha(n)\}$ with $c_i^* = 0$, $V$ sets $R_i^*  = 1/2$.
For all $i$'s such that $c_i^* = 1$, $V$
runs the 2-prover 3-round MRIP protocol in \figref{simple-nexp} for the $\phi_i$'s {\em simultaneously}.
That is, for each such $i$, $V$ uses fresh randomness to compute his messages
to $P_1$ and $P_2$ in the second round of the MRIP protocol
for $\phi_i$, denoted by $m_{12}^i$ and $m_{22}^i$ respectively, which are by definition his first messages in the corresponding MIP protocol.
In the second round of the overall protocol, $V$ sends the concatenation of the $m_{12}^i$'s to $P_1$ and the concatenation of the $m_{22}^i$'s to $P_2$.


\item
In the third round, for each $i$ such that $c_i^*=1$,
$P_1$ computes his response $m_{13}^i$
to $m_{12}^i$, and $P_2$ computes his response $m_{23}^i$ to $m_{22}^i$.
They send the concatenation of their responses to $V$.

\item
For each $i$ such that $c_i^*=1$,
$V$ finishes the MIP protocol following the messages exchanged for~$\phi_i$.
If the MIP protocol accepts then $V$ sets $R_i^* =1$; otherwise $R_i^* = 0$.

\item
Finally, $V$ simulates $M$ till the end using the $c_i^*$'s. If the answer bit $c$ does not match $M$'s output, then the protocol ends with $R=-1$;
otherwise the protocol ends with
$R= (\sum_{i=1}^{\alpha(n)} R_i^*)/\alpha(n)$.
$V$ outputs $c$ at the end of the protocol.
\end{itemize}

The correctness of this protocol is similar to \lemref{notc-lower},
except some subtleties caused by the simultaneous execution of the MRIP protocols for the $\phi_i$'s.
First of all, sending $c$ and $c_1^*,\dots, c_{\alpha(n)}^*$ such that the output of $M$ does not match $c$ cannot be part of the provers' best strategy profile,
because it leads to $R=-1$, while sending all messages truthfully leads to $R\geq 1/2$.
Second, by linearity of expectation, for any strategy profile of the provers such that $c$ matches the output of $M$ given $c_1^*,\dots, c_{\alpha(n)}^*$,
the expected payment is the sum of the expected payment for each $\phi_i$.

Note that for each $\phi_i$,  $V$'s messages in the corresponding MIP protocol only depends on his randomness, and he uses fresh coins for $\phi_i$.
Thus,
even though the provers also see $V$'s messages for other $\phi_j$'s, they cannot improve $V$'s marginal accepting probability for $\phi_i$.
From this, the expected payment for each $\phi_i$ is still maximized when the provers report the correct $c_i^*$ and, when $c_i^*=1$,
run the corresponding MIP protocol correctly.
Therefore, under the provers' best strategy profile, the $c_i^*$'s are correct answers to $M$'s oracle queries, $c$ is the correct output of $M$
given the $c_i^*$'s, and $c=1$ if and only if $x\in L$.



Finally, the utility gap of the above protocol is the same as the protocol in  \figref{constantgap}, which is $O(\alpha(n))$.
So we have
$\sf{P^{||NEXP[\alpha(n)]}} \subseteq \sf{MRIP}[2, 3, O(\alpha(n))]\subseteq \aMRIP$,
where the second inclusion is by definition.
Thus we have shown that,
\[\aMRIP = \sf{MRIP}[2, 3, O(\alpha(n))].\]

\thmref{mrip-const-23} holds by setting $\alpha(n)$ to be a constant,
and \thmref{mrip-poly-23} holds
by considering all functions $\alpha(n)=n^k$, where $k\geq 0$ is a constant.
\end{proofof}


\subsection{Negligible Utility Gap}
\label{subsec:neg}

Next, we prove~\thmref{mrip-23}, 
that is,
any MRIP protocol can be simulated by another one using only 2 provers  and 3 rounds.
In the conference version
of this paper~\cite{ChenMcSi16}, we constructed a protocol to simulate
any MRIP protocol using
$2$ provers and $5$ rounds.
In that protocol, the verifier uses $P_1$'s responses to compute his message to~$P_2$, similar to the protocol in \figref{expnexp},
and thus needs 5 rounds.
%
%
%
We left as an open problem whether it is possible to
improve the round complexity to 3, which is the best possible following the discussion at the end of Section~\ref{sec:nexp-using-mip}.


%

In this work, we remove
the dependency between the verifier's messages to the two provers, so they can be sent in parallel,
achieving the optimal round complexity.
Unlike the protocol in Section~\ref{subsec:const}, this simulation does not preserve the utility gap of the original protocol:
even if the latter has a constant or noticeable utility gap, the resulting one has a negligible gap.

\medskip

\begin{proofof}{\thmref{mrip-23}}
Arbitrarily fix an MRIP protocol $(V, \vec{P})$ for a language $L$ with $p(n)$ provers and $k(n)$ rounds.
Without loss of generality, each message in the protocol is of length $\ell(n)$ for any input of length $n$,
where $\ell(n)$ is a polynomial in $n$.
We shift and re-scale the reward function of~$V$, so that 
the payment is always in $[0, 1]$, and 
the expected payment
is strictly larger than 0 under the provers' best strategy profile.
The corresponding 2-prover 3-round protocol $(V', (P_1', P_2'))$ is defined in~\figref{mripinmrip23}.

Essentially, $V'$ asks $P_1'$ to
simulate all provers in the original protocol.
$V'$ wants to use $P_2'$ to
cross-check the transcript provided by $P_1'$, but in parallel: that is, without waiting
for $P_1'$'s message.  He does so by randomly generating a proxy string of
polynomial length and giving it to $P_2'$.  There is an exponentially small
probability that this string is consistent with the transcript $P_1'$ sends, and
if it turns out to be consistent, $V'$ goes on to match the answers he receives from
$P_1'$ and $P_2'$, and to compute the payment as in the 5-round
protocol in~\cite{ChenMcSi16}.


\begin{figure}[tbhp] \centering
\fbox{
\begin{minipage}{0.96\textwidth}
{\normalsize

\noindent
For any input string $x$ of length $n$, the protocol $(V', \vec{P'})$ works as follows:

\begin{enumerate}[leftmargin=15pt]
\item\label{firstround} $P_1'$ sends $m_{11}, \dots, m_{p(n)1}$ to $V'$, where $m_{ij}$ denotes the message sent by prover $P_i$ in round $j$ of $(V, \vec{P})$ 
according to the best strategy profile $s$ of $\vec{P}$.

Let $c$ be the first bit of $m_{11}$. $V'$ outputs $c$ at the end of the protocol.

\item\label{randomstring}

$V'$ generates the random string $r$ used by $V$ and sends it to $P_1'$.
$V'$ selects, uniformly at random, a prover index $i \in \{1,\ldots, p(n)\}$ and a round number $j\in \{2, \ldots, k(n)\}$.
$V'$ then generates a random string ${m}_i^*$ of length $(j-1)\ell(n)$ and sends $(i, j, {m}_i^*)$ to $P_2'$.

\item\label{p2-response}
$P_1'$ uses $r$, $m_{11},\dots, m_{p(n)1}$ and $s$
to continue simulating the protocol $(V, \vec{P})$, and
sends to $V'$ the messages from round~2 to round $k(n)$ in the resulting transcript $\vec{m}$.
$P_2'$ uses $m_i^*$ (and $s$) to
simulate $P_i$ on round $j$, and
sends the resulting message $m'_{ij}$ to $V'$.

\item If ${m}_i^* \neq (m_{i1},\ldots, m_{i(j-1)})$, then the protocol ends with payment $R'=0$.


\item 
If $m_{ij} \neq m'_{ij}$, then $R' = -1$.
Else, $V'$ computes the payment $R$ in the protocol $(V, \vec{P})$ using $x$, $r$ and $\vec{m}$, and sets $R'=\frac{R}{p(n)2^{k(n)\ell(n)}}$.
\end{enumerate}
}
\end{minipage}
}
\caption{Simulating any MRIP protocol with 2 provers and 3 rounds.}\figlabel{mripinmrip23}
\end{figure}

To see why this protocol works, first note that,
even though $V'$ sends to $P_1'$ the randomness $r$ used by~$V$, $V'$ himself uses
fresh randomness
in Step~\ref{randomstring} to generate $i$, $j$ and $m_i^*$, which are unknown to $P_1'$.
Second, the strategy of $P_2'$ in Step~\ref{p2-response} de facto
commits to a strategy profile for the provers in $(V, \vec{P})$ except for the first round,
which together with the randomness $r$ of $V$ and $m_{11},\dots, m_{p(n)1}$ sent by $P_1'$ determines a transcript $\vec{m}^*$ in $(V, \vec{P})$.
%
%

We distinguish two cases for the strategy profiles of $(P_1', P_2')$.

{\bf Case 1.} For some randomness $r$,
$P_1'$ and $P_2'$ do not agree on the transcript under $r$: that is, $\vec{m}\neq \vec{m}^*$,
where $\vec{m}$ is the transcript
sent by $P_1'$. Arbitrarily fix such an $r$.
Suppose $\vec{m}$  disagrees with $\vec{m}^*$ on some $y$ out of $p(n)(k(n)-1)$
messages, with $y\geq 1$.
Then the probability that the prover index $i$ and the round number~$j$ chosen by $V'$
in Step~\ref{randomstring}
satisfy $m_{ij}^*\neq m_{ij}$ is $\frac{y}{p(n)(k(n)-1)}$.

When $m_{ij}^*\neq m_{ij}$, if the random string $m_i^*$ generated  by $V'$ in Step~\ref{randomstring}
does not equal $(m_{i1},\dots, m_{i(j-1)})$, then
the inconsistency between $m_{ij}^*$ and $m_{ij}$ is not caught and the payment is 0;
otherwise the payment is $-1$.
When $m_{ij}^*=m_{ij}$, the payment is either 0 or at most $\frac{1}{p(n)2^{k(n)\ell(n)}}$,
again depending on whether $m_i^* = (m_{i1},\dots, m_{i(j-1)})$ or not.
Finally, as the length of each message in $(V, \vec{P})$ is $\ell(n)$,
for any $i$ and $j$,
the probability that
$m_i^* = (m_{i1},\dots, m_{i(j-1)})$ is $\frac{1}{2^{(j-1)\ell(n)}}\geq \frac{1}{2^{(k(n)-1)\ell(n)}}$.
We upper bound the expected payment $R'$ in Case 1 under $r$ as follows.

\begin{align*}
R' &\leq \sum_{i\leq p(n), 2\leq j\leq k(n)} \frac{1}{p(n)(k(n)-1)}\cdot \frac{1}{2^{(j-1)\ell(n)}} \cdot \left( \mathbb{I}_{m_{ij}^*\neq m_{ij}}\cdot  (-1) +  \mathbb{I}_{m_{ij}^*= m_{ij}} \cdot  \frac{1}{p(n)2^{k(n)\ell(n)}}\right) \\
&\leq - \frac{y}{p(n)(k(n)-1)}\cdot \frac{1}{2^{(k(n)-1)\ell(n)}} \\
& \quad + \sum_{i\leq p(n), 2\leq j\leq k(n)} \frac{1}{p(n)(k(n)-1)}\cdot \frac{1}{2^{(j-1)\ell(n)}} \cdot  \mathbb{I}_{m_{ij}^*= m_{ij}} \cdot  \frac{1}{p(n)2^{k(n)\ell(n)}}\\
&< - \frac{y}{p(n)(k(n)-1)}\cdot \frac{1}{2^{(k(n)-1)\ell(n)}} + \sum_{2\leq j\leq k(n)} \frac{1}{k(n)-1} \cdot \frac{1}{2^{(j-1)\ell(n)}} \cdot  \frac{1}{p(n)2^{k(n)\ell(n)}} \\
&< - \frac{y}{p(n)(k(n)-1)}\cdot \frac{1}{2^{(k(n)-1)\ell(n)}} + \frac{1}{(k(n)-1)p(n)2^{k(n)\ell(n)}} \\
&= \frac{1-2y}{(k(n)-1)p(n)2^{k(n)\ell(n)}} < 0.
\end{align*}

On the other hand, if $P_1'$ acts consistently with $P_2'$ in Step~\ref{p2-response}  under $r$, and keeps his strategy unchanged under any other randomness of $V$ sent to him by $V'$, then
the expected payment under $r$ is at least 0 and the expected payment under any other randomness of $V$ does not change;
therefore, the expected payment in the whole protocol gets larger.
Accordingly, under the best strategy profile of $(P_1', P_2')$, Case 1 does not occur for any randomness $r$ of $V$.

\smallskip

{\bf Case 2.}
In their strategy profile $s'$, $P_1'$ and $P_2'$ agree on the
transcript $\vec{m}$ under every randomness $r$ of $V$, 
but the strategy profile $\tilde{s}$ committed by them for $(V, \vec{P})$ (that is, by
$P_1'$ in Step~\ref{firstround} for round 1 and then by $P_2'$ in Step~\ref{p2-response} for the remaining rounds)
has 
the answer bit $c$ incorrect. Thus $\tilde{s}$ is not the best strategy profile $s$ of $\vec{P}$.

In this case, given any randomness $r$, prover $i$ and round $j$ chosen by $V'$ in  Step~\ref{randomstring}, 
the expected payment is
\[
 R' = \frac{1}{2^{(j-1)\ell(n)}} \cdot \frac{R}{p(n)2^{k(n)\ell(n)}},
\]
where $R$ is the payment of $(V, \vec{P})$ under $\tilde{s}$ and $r$.
Therefore, the expected payment for $P_1'$ and $P_2'$ in the whole protocol is
\begin{align*}
u_{(V', \vec{P}')}(s'; x) & = \sum_{i\leq p(n), 2\leq j\leq k(n)} \frac{1}{p(n)(k(n)-1)} \cdot \frac{1}{2^{(j-1)\ell(n)}} \cdot \frac{u_{(V, \vec{P})}(\tilde{s};x)}{p(n)2^{k(n)\ell(n)}} \\
& < \sum_{i\leq p(n), 2\leq j\leq k(n)} \frac{1}{p(n)(k(n)-1)} \cdot \frac{1}{2^{(j-1)\ell(n)}} \cdot \frac{u_{(V, \vec{P})}(s;x)}{p(n)2^{k(n)\ell(n)}},
\end{align*}
where the inequality is because $u_{(V, \vec{P})}(\tilde{s};x)< u_{(V, \vec{P})}(s;x)$.
Note that the second line in the equation above is exactly the expected payment for $P_1'$ and $P_2'$ when they commit to $s$.
Thus committing to $\tilde{s}$ is not the best strategy profile for $P_1'$ and $P_2'$.

\smallskip
In sum, a best strategy profile for the provers in $(V', \vec{P}')$ is to commit to a best strategy profile $s$ in $(V, \vec{P})$,
and the corresponding answer bit $c$ is 1 if and only if $x\in L$, following fact that $(V, \vec{P})$ is an MRIP protocol for $L$.
\end{proofof}

\section*{Acknowledgments} \seclabel{ack}
We thank anonymous reviewers for their valuable feedback that
helped improve this paper, and Sanjoy Das, Andrew Drucker, Silvio Micali and Rafael Pass for helpful comments.
This work has been partially supported by NSF CAREER Award CCF 1553385,
CNS 1408695, CCF 1439084, IIS 1247726, IIS 1251137, and CCF 1217708, and Sandia National Laboratories.

\bibliographystyle{plain}


\end{document}